\documentclass[11pt,a4paper]{llncs}
\usepackage{amsmath}
\usepackage{amssymb}
\usepackage{graphicx}
\usepackage{marvosym}
\usepackage{url}
\usepackage{fancyhdr}
\usepackage{geometry}
\usepackage{booktabs}
\usepackage{multirow}
\usepackage{float}
\usepackage{needspace}
\usepackage{listings}
\usepackage{xcolor}
\usepackage[ruled,vlined,linesnumbered]{algorithm2e}
\usepackage{tikz}
\usetikzlibrary{shapes,arrows,positioning,fit,calc}
\providecommand{\subparagraph}{\paragraph}

\usepackage{etoolbox}
\patchcmd{\thebibliography}{\setlength{\itemsep}{0pt}}{}{}{}

\usepackage{enumitem}
\setlist{noitemsep, topsep=2pt}
\usepackage{microtype}
\usepackage{placeins}

\newcommand{\keywords}[1]{\par\addvspace\baselineskip
\noindent\keywordname\enspace\ignorespaces#1}

\fancypagestyle{firstpage}{\fancyhf{}}

\newtheorem{mytheorem}{Theorem}
\newtheorem{mylemma}{Lemma}
\newtheorem{mycorollary}{Corollary}
\newtheorem{mydefinition}{Definition}

\begin{document}
\raggedbottom
\clubpenalty=10000
\widowpenalty=10000
\emergencystretch=1.5em
% Tighten spacing above/below displayed equations
\setlength{\abovedisplayskip}{4pt plus 2pt minus 2pt}
\setlength{\belowdisplayskip}{4pt plus 2pt minus 2pt}
\setlength{\abovedisplayshortskip}{2pt plus 1pt}
\setlength{\belowdisplayshortskip}{2pt plus 1pt}

\title{\LARGE{CogniCrypt: Synergistic Directed Execution and
LLM-Driven Analysis for Zero-Day AI-Generated
Malware Detection}}

\author{\large{George Edward, Mahdi Eslamimehr}}
\institute{\large{Quandary Peak Research}}

\maketitle
\thispagestyle{firstpage}

%=================================================================
% ABSTRACT
%=================================================================
\begin{abstract}
The weaponization of Large Language Models (LLMs) for automated malware generation poses an existential threat to conventional detection paradigms. AI-generated malware exhibits polymorphic, metamorphic, and context-aware evasion capabilities that render signature-based and shallow heuristic defenses obsolete. This paper introduces \textbf{CogniCrypt}, a novel hybrid analysis framework that synergistically combines \emph{concolic execution} with \emph{LLM-augmented path prioritization} and \emph{deep-learning-based vulnerability classification} to detect zero-day AI-generated malware with provable guarantees. We formalize the detection problem within a first-order temporal logic over program execution traces, define a lattice-theoretic abstraction for path constraint spaces, and prove both the \emph{soundness} and \emph{relative completeness} of our detection algorithm, assuming classifier correctness. The framework introduces three novel algorithms: (i) an LLM-guided concolic exploration strategy that reduces the average number of explored paths by 73.2\% compared to depth-first search while maintaining equivalent malicious-path coverage; (ii) a transformer-based path-constraint classifier trained on symbolic execution traces; and (iii) a feedback loop that iteratively refines the LLM's prioritization policy using reinforcement learning from detection outcomes. We provide a comprehensive implementation built upon \texttt{angr} 9.2, \texttt{Z3} 4.12, Hugging Face \texttt{Transformers} 4.38, and \texttt{PyTorch} 2.2, with full configuration details enabling reproducibility. Experimental evaluation on the EMBER, Malimg, SOREL-20M, and a novel AI-Gen-Malware benchmark comprising 2{,}500 LLM-synthesized samples demonstrates that CogniCrypt achieves 98.7\% accuracy on conventional malware and 97.5\% accuracy on AI-generated threats, outperforming ClamAV, YARA, MalConv, and EMBER-GBDT baselines by margins of 8.4--52.2 percentage points on AI-generated samples.

\keywords{Concolic Execution, Large Language Models, AI-Generated Malware, Symbolic Execution, Vulnerability Discovery, Software Security, Formal Verification, Deep Learning, Zero-Day Detection, Secure Coding.}
\end{abstract}

%=================================================================
% 1. INTRODUCTION
%=================================================================
\section{Introduction}

The cybersecurity landscape is undergoing a fundamental transformation driven by the dual-use nature of Large Language Models (LLMs). While LLMs have accelerated legitimate software development through code generation, refactoring, and automated testing~\cite{chen2021codex}, adversaries have simultaneously exploited these capabilities to produce sophisticated malware at unprecedented scale and velocity~\cite{pa2023attacker,gupta2023chatgpt}. Recent threat intelligence reports document a 135\% year-over-year increase in AI-assisted cyberattacks, with LLM-generated payloads exhibiting polymorphic behavior, semantic-level obfuscation, and adaptive evasion strategies that defeat traditional signature-based and static-heuristic defenses~\cite{europol2023chatgpt}.

The fundamental challenge posed by AI-generated malware is threefold. First, LLMs can produce functionally equivalent but syntactically diverse variants of the same exploit, defeating hash-based and pattern-matching detectors. Second, AI-generated code can embed trigger conditions that activate malicious behavior only under specific environmental contexts, evading sandbox-based dynamic analysis. Third, LLMs can iteratively refine evasion strategies by analyzing detection feedback, creating an adversarial arms race that static defense postures cannot sustain.

Concolic execution,a portmanteau of \emph{con}crete and symb\emph{olic} execution,offers a principled approach to this challenge by systematically exploring program execution paths through the interplay of concrete test inputs and symbolic constraint solving~\cite{sen2005cute,godefroid2005dart}. By maintaining both a concrete execution state and a symbolic path constraint, concolic engines can reason about the conditions under which specific program behaviors manifest, including latent malicious behaviors hidden behind opaque predicates and environmental checks. However, the well-known \emph{path explosion problem},where the number of feasible paths grows exponentially with program size and branching complexity,has historically limited the scalability of concolic analysis for real-world malware detection~\cite{cadar2008klee,baldoni2018survey}.

This paper introduces \textbf{CogniCrypt}, a framework that resolves the scalability limitation by employing an LLM as an intelligent \emph{path oracle} that guides the concolic engine toward execution paths with high malicious potential. The key insight is that LLMs, having been pre-trained on vast corpora of source code and security advisories, possess an implicit model of ``suspicious'' program behavior that can be leveraged to prioritize the exploration of paths most likely to reveal malicious intent. CogniCrypt further incorporates a transformer-based \emph{path constraint classifier} that maps symbolic execution traces to maliciousness scores, and a \emph{reinforcement learning feedback loop} that continuously improves the LLM's prioritization policy based on detection outcomes.

\smallskip
\noindent\textbf{Contributions.} This paper makes the following contributions:

\begin{enumerate}
    \item \textbf{Formal Framework:} We define a first-order temporal logic $\mathcal{L}_{\text{CogniCrypt}}$ over program execution traces and establish a lattice-theoretic abstraction of the path constraint space. We prove the \emph{soundness} (no false negatives under the threat model) and \emph{relative completeness} (detection of all malicious paths reachable within a bounded exploration budget) of the CogniCrypt detection algorithm (Section~\ref{sec:theory}).
    \item \textbf{Novel Algorithms:} We present three tightly integrated algorithms: LLM-Guided Concolic Exploration (Algorithm~\ref{alg:main}), Transformer-Based Path Constraint Classification (Algorithm~\ref{alg:classifier}), and Reinforcement-Learning-Based Policy Refinement (Algorithm~\ref{alg:rl}) (Section~\ref{sec:algorithms}).
    \item \textbf{Comprehensive Implementation:} We provide a fully reproducible implementation built on \texttt{angr}, \texttt{Z3}, \texttt{PyTorch}, and Hugging Face \texttt{Transformers}, with detailed configuration, hyperparameter settings, and deployment instructions (Section~\ref{sec:implementation}).
    \item \textbf{Extensive Evaluation:} We evaluate CogniCrypt on four benchmarks,EMBER~\cite{anderson2018ember}, Malimg~\cite{nataraj2011malimg}, SOREL-20M~\cite{harang2020sorel}, and a novel AI-Gen-Malware dataset,demonstrating state-of-the-art performance, particularly on AI-generated threats (Section~\ref{sec:experiments}).
\end{enumerate}

%=================================================================
% 2. RELATED WORK
%=================================================================
\section{Related Work}\label{sec:related}

\subsection{Concolic and Symbolic Execution for Security Analysis}

Symbolic execution was introduced by King~\cite{king1976symbolic} and has since become a cornerstone of program analysis. DART~\cite{godefroid2005dart} and CUTE~\cite{sen2005cute} pioneered concolic (dynamic symbolic) execution, combining concrete execution with symbolic constraint solving to achieve higher path coverage than pure symbolic approaches. KLEE~\cite{cadar2008klee} demonstrated the scalability of symbolic execution on real-world systems software by leveraging the LLVM intermediate representation. S2E~\cite{chipounov2012s2e} introduced selective symbolic execution, enabling analysts to focus on specific code regions within full-system emulation. More recently, angr~\cite{shoshitaishvili2016sok} provided a comprehensive Python-based binary analysis platform supporting both symbolic and concolic execution, while Triton~\cite{saudel2015triton} offered a lightweight dynamic binary analysis framework with taint tracking and symbolic execution capabilities.

In the malware analysis domain, Moser et al.~\cite{moser2007exploring} applied symbolic execution to explore multiple execution paths in malware samples, revealing hidden behaviors triggered by environmental conditions. Brumley et al.~\cite{brumley2008automatically} used symbolic execution for automatic patch-based exploit generation. Vouvoutsis et al.~\cite{vouvoutsis2025beyond} recently demonstrated that symbolic execution can complement sandbox analysis to detect new malware strains. However, none of these works address the specific challenge of AI-generated malware or incorporate LLM-based guidance.

\subsection{Machine Learning for Malware Detection}

Machine learning approaches to malware detection have evolved from shallow models operating on hand-crafted features to deep learning architectures processing raw binary data. Raff et al.~\cite{raff2018malconv} introduced MalConv, a convolutional neural network that classifies PE files directly from raw bytes. Anderson and Roth~\cite{anderson2018ember} released the EMBER dataset and demonstrated the effectiveness of gradient-boosted decision trees (GBDT) on engineered PE features. Nataraj et al.~\cite{nataraj2011malimg} proposed visualizing malware binaries as grayscale images and applying computer vision techniques for classification.

More recently, transformer-based architectures have been applied to malware detection. Li et al.~\cite{li2021malbert} proposed MalBERT, which fine-tunes BERT on disassembled malware code for family classification. Hossain et al.~\cite{hossain2024malicious} demonstrated the use of Mixtral LLM for detecting malicious Java code. Al-Karaki et al.~\cite{alkaraki2024exploring} provided a comprehensive framework for LLM-based malware detection, identifying key challenges including prompt engineering, context window limitations, and adversarial robustness.

\subsection{AI-Generated Malware and Adversarial Threats}

The emergence of AI-generated malware represents a paradigm shift in the threat landscape. Pa et al.~\cite{pa2023attacker} demonstrated that ChatGPT can generate functional malware when prompted with carefully crafted instructions. Gupta and Sharma~\cite{gupta2023chatgpt} showed that LLMs can produce polymorphic malware variants that evade signature-based detection. Beckerich et al.~\cite{beckerich2023ratgpt} introduced RatGPT, demonstrating automated phishing and C2 infrastructure generation. These works underscore the urgent need for detection techniques specifically designed to counter AI-generated threats.

\subsection{Hybrid Approaches}

Several works have explored the combination of symbolic execution with machine learning. Learch~\cite{he2021learning} used reinforcement learning to guide symbolic execution path selection in KLEE. However, no prior work has combined concolic execution with LLM-based guidance specifically for the detection of AI-generated malware, which is the unique contribution of CogniCrypt.

%=================================================================
% 3. THEORETICAL FOUNDATIONS
%=================================================================
\section{Theoretical Foundations}\label{sec:theory}

\subsection{Program Model and Execution Semantics}

\begin{mydefinition}[Program Model]
A program $P$ is modeled as a labeled transition system $\mathcal{T}_P = (\Sigma, \Sigma_0, \mathcal{I}, \rightarrow, \mathcal{O})$ where:
\begin{itemize}
    \item $\Sigma$ is a finite set of program states, where each state $\sigma = (\ell, \mu, \rho) \in \Sigma$ consists of a program location $\ell \in \mathcal{L}$, a memory map $\mu: \mathcal{A} \rightarrow \mathcal{V}$, and a register file $\rho: \mathcal{R} \rightarrow \mathcal{V}$;
    \item $\Sigma_0 \subseteq \Sigma$ is the set of initial states;
    \item $\mathcal{I}$ is the input domain;
    \item $\rightarrow \subseteq \Sigma \times \Sigma$ is the transition relation;
    \item $\mathcal{O} \subseteq \Sigma$ is the set of observable (output) states.
\end{itemize}
\end{mydefinition}

\begin{mydefinition}[Execution Trace]
An execution trace $\tau = \sigma_0 \sigma_1 \cdots \sigma_n$ is a finite sequence of states such that $\sigma_0 \in \Sigma_0$ and $\sigma_i \rightarrow \sigma_{i+1}$ for all $0 \leq i < n$. The set of all execution traces of $P$ is denoted $\mathcal{T}(P)$.
\end{mydefinition}

\begin{mydefinition}[Symbolic State]
A symbolic state $\hat{\sigma} = (\ell, \hat{\mu}, \hat{\rho}, \pi)$ extends a concrete state with symbolic expressions. Here $\hat{\mu}: \mathcal{A} \rightarrow \text{Expr}(\mathcal{X})$ and $\hat{\rho}: \mathcal{R} \rightarrow \text{Expr}(\mathcal{X})$ map locations and registers to expressions over symbolic variables $\mathcal{X}$, and $\pi$ is a \emph{path constraint},a quantifier-free first-order formula over $\mathcal{X}$.
\end{mydefinition}

\begin{mydefinition}[Path Constraint Space]
The path constraint space $\Pi(P) = \{\pi_1, \pi_2, \ldots\}$ is the set of all satisfiable path constraints generated during the symbolic exploration of $P$. We define a partial order $\sqsubseteq$ on $\Pi(P)$ by logical implication: $\pi_i \sqsubseteq \pi_j$ iff $\pi_j \models \pi_i$. The structure $(\Pi(P), \sqsubseteq)$ forms a bounded lattice with $\top = \texttt{true}$ and $\bot = \texttt{false}$.
\end{mydefinition}

\subsection{Temporal Logic for Malicious Behavior Specification}

We define a first-order linear temporal logic $\mathcal{L}_{\text{CogniCrypt}}$ for specifying malicious behaviors over execution traces.

\begin{mydefinition}[Syntax of $\mathcal{L}_{\text{CogniCrypt}}$]
Formulas $\varphi$ of $\mathcal{L}_{\text{CogniCrypt}}$ are defined by the grammar:
\[
\varphi ::= p \mid \neg \varphi \mid \varphi \wedge \varphi \mid \varphi \vee \varphi \mid \varphi \Rightarrow \varphi \mid \mathbf{X}\varphi \mid \mathbf{F}\varphi \mid \mathbf{G}\varphi \mid \varphi\,\mathbf{U}\,\varphi \mid \exists x.\varphi \mid \forall x.\varphi
\]
where $p$ represents an atomic proposition over program states (e.g., $\texttt{syscall}(\sigma) = \texttt{execve}$, $\texttt{writes\_to}(\sigma, \texttt{/etc/shadow})$), $\mathbf{X}$ is ``next,'' $\mathbf{F}$ is ``eventually,'' $\mathbf{G}$ is ``globally,'' and $\mathbf{U}$ is ``until.''
\end{mydefinition}

\begin{mydefinition}[Malicious Behavior Specification]
A malicious behavior specification $\Phi_{\text{mal}}$ is a finite set of $\mathcal{L}_{\text{CogniCrypt}}$ formulas $\{\varphi_1, \varphi_2, \ldots, \varphi_k\}$, each encoding a distinct class of malicious behavior. A trace $\tau$ is \emph{malicious} with respect to $\Phi_{\text{mal}}$ iff $\tau \models \bigvee_{i=1}^{k} \varphi_i$.
\end{mydefinition}

\noindent\textbf{Example Specifications:}

\begin{itemize}
    \item \textbf{Data Exfiltration:} $\varphi_{\text{exfil}} = \mathbf{F}(\texttt{read}(f_{\text{sensitive}}) \wedge \mathbf{F}(\texttt{send}(\texttt{socket}, \texttt{data})))$
    \item \textbf{Privilege Escalation:} $\varphi_{\text{privesc}} = \mathbf{F}(\texttt{uid}(\sigma) \neq 0 \wedge \mathbf{X}(\texttt{uid}(\sigma) = 0))$
    \item \textbf{Persistence Installation:} $\varphi_{\text{persist}} = \mathbf{F}(\texttt{writes\_to}(\sigma, \texttt{cron}) \vee \texttt{writes\_to}(\sigma, \texttt{systemd}))$
    \item \textbf{Polymorphic Self-Modification:} $\varphi_{\text{poly}} = \mathbf{F}(\texttt{mprotect}(\sigma, \texttt{RWX}) \wedge \mathbf{F}(\texttt{write}(\sigma, \texttt{.text})))$
\end{itemize}

\subsection{Concolic Execution Formalization}

\begin{mydefinition}[Concolic Execution]
A concolic execution of program $P$ with concrete input $\mathbf{c} \in \mathcal{I}$ and symbolic input $\mathbf{x} \in \mathcal{X}^{|\mathcal{I}|}$ produces a pair $(\tau_c, \hat{\tau}_s)$ where $\tau_c$ is the concrete trace and $\hat{\tau}_s = \hat{\sigma}_0 \hat{\sigma}_1 \cdots \hat{\sigma}_n$ is the symbolic trace. At each conditional branch $b_i$ with symbolic condition $c_i(\mathbf{x})$, the path constraint is updated:
\begin{equation}
\pi_{i+1} = \pi_i \wedge \begin{cases} c_i(\mathbf{x}) & \text{if branch taken concretely} \\ \neg c_i(\mathbf{x}) & \text{otherwise} \end{cases}
\end{equation}
The concolic engine generates new test inputs by negating individual branch conditions and solving the resulting constraint:
\begin{equation}
\mathbf{c}' = \text{Solve}\left(\pi_1 \wedge \pi_2 \wedge \cdots \wedge \pi_{j-1} \wedge \neg c_j(\mathbf{x})\right)
\end{equation}
\end{mydefinition}

\subsection{LLM-Guided Path Prioritization}

\begin{mydefinition}[Path Priority Function]
Let $\mathcal{M}_{\text{LLM}}$ be a pre-trained large language model. We define the path priority function $\omega: \Pi(P) \rightarrow [0, 1]$ as:
\begin{equation}
\omega(\pi) = \mathcal{M}_{\text{LLM}}\left(\text{Encode}(\pi, \mathcal{C}(\pi))\right)
\end{equation}
where $\text{Encode}(\pi, \mathcal{C}(\pi))$ is a textual encoding of the path constraint $\pi$ together with its associated code context $\mathcal{C}(\pi)$ (disassembled instructions along the path). The output $\omega(\pi) \in [0, 1]$ represents the LLM's estimated probability that the path leads to malicious behavior.
\end{mydefinition}

\begin{mydefinition}[Priority Queue Ordering]
The exploration priority queue $\mathcal{Q}$ is ordered by $\omega$: for any two pending paths $\pi_a, \pi_b \in \mathcal{Q}$, $\pi_a$ is explored before $\pi_b$ iff $\omega(\pi_a) > \omega(\pi_b)$.
\end{mydefinition}

\subsection{Soundness and Completeness}

\begin{mytheorem}[Soundness]\label{thm:soundness}
Let $P$ be a program and $\Phi_{\text{mal}}$ be a malicious behavior specification. If CogniCrypt reports $P$ as malicious, then there exists an execution trace $\tau \in \mathcal{T}(P)$ and a formula $\varphi_i \in \Phi_{\text{mal}}$ such that $\tau \models \varphi_i$.
\end{mytheorem}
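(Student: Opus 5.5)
The argument unfolds the reporting condition back to a concrete witness. The algorithm itself is not yet in view, so I fix the reporting rule I expect it to use, in line with the abstract's proviso ``assuming classifier correctness''. CogniCrypt reports $P$ as malicious only when two things hold. First, some explored symbolic path, with constraint $\pi\in\Pi(P)$, is flagged by the path-constraint classifier. Second, that flag is confirmed by a satisfiable model. The classifier-correctness assumption reads: whenever the classifier flags a symbolic trace $\hat{\tau}_s$ with constraint $\pi$ as exhibiting $\varphi_i$, every concrete trace that realizes $\hat{\tau}_s$ satisfies $\varphi_i$. The proof then breaks into three steps: (1) satisfiability yields a concrete input, (2) that input produces a genuine trace in $\mathcal{T}(P)$, and (3) the trace satisfies $\varphi_i$.

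For step (1), note that by the definition of the path constraint space, every element of $\Pi(P)$ is satisfiable. So $\mathrm{Solve}(\pi)$ returns a model, and hence a concrete input $\mathbf{c}\in\mathcal{I}$. For step (2), I argue by induction on the length of the concolic trace, using the update rule for $\pi_{i+1}$. The inductive invariant is that $\mathbf{c}\models\pi_i$ implies that the concrete execution on $\mathbf{c}$ follows the same branch decisions as $\hat{\sigma}_0\cdots\hat{\sigma}_i$. It also implies that each concrete state equals the symbolic state evaluated under $\mathbf{c}$, i.e.\ $\sigma_j=\hat{\sigma}_j[\mathbf{x}\mapsto\mathbf{c}]$.
\begin{itemize}
\item Base case: $\sigma_0\in\Sigma_0$.
\item Non-branching steps: the symbolic transfer functions $\hat{\mu},\hat{\rho}$ are assumed to faithfully abstract $\rightarrow$.
\item Branching steps: the conjunct $c_i$ or $\neg c_i$ in $\pi$ forces the concrete branch to match the symbolic one.
\end{itemize}
The resulting $\tau=\sigma_0\cdots\sigma_n$ has $\sigma_i\rightarrow\sigma_{i+1}$, so $\tau\in\mathcal{T}(P)$.

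For step (3), apply classifier correctness to $\hat{\tau}_s$ to conclude $\tau\models\varphi_i$. If the algorithm instead checks $\varphi_i$ directly on the trace, I would prove this step by structural induction on $\varphi_i$. Atomic propositions are evaluated on the states $\sigma_j$, which the invariant ties to the symbolic states. The temporal operators $\mathbf{X},\mathbf{F},\mathbf{G},\mathbf{U}$ follow from the standard finite-trace semantics, because the concrete and symbolic traces agree position by position. The first-order quantifiers are handled by the same argument, with witnesses instantiated under the model $\mathbf{c}$.

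The main obstacle is step (2): the faithfulness of symbolic execution. Real engines such as angr concretize memory addresses, model system calls with stubs, and may under-approximate the environment. Under any of these, a satisfiable $\pi$ need not correspond to a real trace. I would state this as an explicit hypothesis: the symbolic semantics is sound with respect to $\rightarrow$, and environment models over-approximate nothing that is flagged. I would then confine the induction to that hypothesis, and note that soundness holds relative to it, just as it holds relative to classifier correctness.
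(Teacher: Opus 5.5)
Your proposal is essentially the paper's proof: satisfiability of the flagged constraint (the Z3 check in Algorithm~\ref{alg:main}) yields a feasible trace, and a classifier-correctness hypothesis turns the \texttt{MALICIOUS} verdict into $\tau\models\varphi_i$, with your induction on the concolic trace making explicit the symbolic-execution faithfulness that the paper's Step~1 asserts in one line. The one substantive difference is that your classifier hypothesis is exact, so you obtain the theorem as literally stated, whereas the paper assumes only that positive predictions are correct with probability at least $1-\epsilon$ and therefore concludes only ``up to the classifier's error rate $\epsilon$''.
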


\begin{proof}
CogniCrypt reports $P$ as malicious only when the vulnerability classifier $\mathcal{C}_{\text{vuln}}$ returns $\texttt{MALICIOUS}$ for some path constraint $\pi^*$ generated by the concolic engine. By construction:

\noindent\textbf{Step 1 (Path Feasibility):} The concolic engine maintains the invariant that every generated path constraint $\pi$ is satisfiable, i.e., $\exists \mathbf{c} \in \mathcal{I}: \mathbf{c} \models \pi$. This is enforced by the Z3 SMT solver check at line 12 of Algorithm~\ref{alg:main}. Therefore, $\pi^*$ corresponds to a feasible execution trace $\tau^* \in \mathcal{T}(P)$.

\noindent\textbf{Step 2 (Classifier Correctness):} The vulnerability classifier $\mathcal{C}_{\text{vuln}}$ is trained with a loss function that penalizes false positives with weight $w_{\text{FP}} = 5.0$ (Section~\ref{sec:implementation}). Under the assumption that the training data is representative of the threat model $\Phi_{\text{mal}}$, the classifier's positive predictions correspond to traces satisfying some $\varphi_i \in \Phi_{\text{mal}}$ with probability $\geq 1 - \epsilon$, where $\epsilon$ is the empirically measured false positive rate.

\begin{sloppypar}\noindent\textbf{Step 3 (Trace-Specification Correspondence):} The feature extraction function $\text{Extract}(\pi^*, \tau^*)$ (Algorithm~\ref{alg:classifier}, line 3) maps the path constraint and its associated trace to a feature vector that encodes the behavioral semantics relevant to $\Phi_{\text{mal}}$. The classifier's decision boundary partitions the feature space into regions corresponding to the disjuncts of $\Phi_{\text{mal}}$.\end{sloppypar}

Therefore, if CogniCrypt reports $P$ as malicious, there exists $\tau^* \in \mathcal{T}(P)$ such that $\tau^* \models \varphi_i$ for some $\varphi_i \in \Phi_{\text{mal}}$, up to the classifier's error rate $\epsilon$. \qed
\end{proof}

\begin{mytheorem}[Relative Completeness]\label{thm:completeness}
Let $P$ be a program, $\Phi_{\text{mal}}$ be a malicious behavior specification, and $B \in \mathbb{N}$ be an exploration budget (maximum number of paths). If there exists a malicious trace $\tau^* \in \mathcal{T}(P)$ with $\tau^* \models \varphi_i$ for some $\varphi_i \in \Phi_{\text{mal}}$, and the corresponding path constraint $\pi^*$ is within the top-$B$ paths ranked by $\omega$, then CogniCrypt will detect $\tau^*$.
\end{mytheorem}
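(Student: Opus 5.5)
The proof will mirror the structure of Theorem~\ref{thm:soundness}: trace the malicious path $\pi^*$ through the three stages of CogniCrypt (enqueueing, exploration under the budget, classification) and show that each stage preserves it. The argument is relative in two ways. The hypothesis ``$\pi^*$ is within the top-$B$ paths ranked by $\omega$'' controls the exploration stage. The assumption of classifier correctness, already used in Step~2 of Theorem~\ref{thm:soundness}, controls the final stage. I will state explicitly that the result holds under the assumption that $\mathcal{C}_{\text{vuln}}$ returns \texttt{MALICIOUS} on every feature vector $\text{Extract}(\pi,\tau)$ with $\tau\models\varphi_i$ for some $\varphi_i\in\Phi_{\text{mal}}$. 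Equivalently, the claim holds up to the classifier's false-negative rate.

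\textbf{Step 1 (Generation).} Since $\tau^*\in\mathcal{T}(P)$ is feasible, its path constraint $\pi^*$ is satisfiable and therefore lies in $\Pi(P)$. I will argue by induction on the length of the branch prefix of $\tau^*$ that every prefix constraint $\pi^*_j = \pi_1\wedge\cdots\wedge\pi_{j}$ is eventually produced by the concolic engine. Each such prefix arises from an earlier prefix by the branch-negation rule of the concolic execution definition. Satisfiability of $\pi^*$ implies satisfiability of each prefix, because of the lattice order $\sqsubseteq$: $\pi^*\models\pi^*_j$. Hence the Z3 check never prunes any of them, and each prefix is pushed onto $\mathcal{Q}$.

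\textbf{Step 2 (Selection within budget).} By the priority queue ordering, paths are popped in strictly decreasing order of $\omega$. Suppose $\pi^*$ is among the top-$B$ paths. Then at most $B-1$ pending paths have strictly larger priority, so $\pi^*$ is popped before the budget counter reaches $B$. I expect this to be the main obstacle. The ranking ``top-$B$ by $\omega$'' is a global ranking over $\Pi(P)$, but the queue only ever contains the paths discovered so far. Ties in $\omega$ are also left unresolved by the strict ordering, and the RL loop of Algorithm~\ref{alg:rl} may change $\omega$ during exploration.

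My first approach is to fix $\omega$ for the duration of one analysis run, so that any policy refinement happens between runs. I will also interpret ``top-$B$'' as ``top-$B$ among the paths the engine enqueues.'' Under these conventions, an exchange argument works. Any path popped before $\pi^*$ has $\omega$ at least $\omega(\pi^*)$. There are at most $B-1$ such paths, with ties counted against the bound, and this gives the claim.

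\textbf{Step 3 (Detection).} Once $\pi^*$ is explored, the engine records the trace $\tau^*$ and calls Algorithm~\ref{alg:classifier} on $\text{Extract}(\pi^*,\tau^*)$. By the classifier-correctness assumption and the trace-specification correspondence (Step~3 of Theorem~\ref{thm:soundness}), the result is \texttt{MALICIOUS}. CogniCrypt therefore reports $P$ and detects $\tau^*$, which concludes the proof.
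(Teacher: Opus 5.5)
Your route is the same as the paper's. The paper also argues that exploration in decreasing order of $\omega$, together with the top-$B$ hypothesis, places $\pi^*$ within the budget. It then invokes a recall assumption on $\mathcal{C}_{\text{vuln}}$ (recall $\geq 1-\delta$, hence detection with probability $\geq 1-\delta$), followed by an empirical remark on budget sufficiency. The paper simply asserts this global ordering. You correctly identify it as the crux, but your repair in Step~2 fails.

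Your claim ``any path popped before $\pi^*$ has $\omega$ at least $\omega(\pi^*)$'' holds only for pops that occur \emph{after} $\pi^*$ has been pushed. In Algorithm~\ref{alg:main}, $\pi^*$ is pushed only after the state it is forked from has been popped and explored. The same holds recursively for that state's own ancestors in the fork tree. Nothing bounds the ancestors' scores from below, so they, and everything that outranks them, can consume the budget before $\pi^*$ even exists.

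Here is a minimal counterexample. The root (pushed with $\omega=1.0$) forks only a state $x$ with $\omega(x)=0.1$, and $x$ forks only $\pi^*$ with $\omega(\pi^*)=0.9$. Then $\pi^*$ ranks second by $\omega$ among all paths involved, so it is in the top-$2$ under any reading of your convention. Yet with $B=2$ the loop pops the root and $x$ and halts with $\pi^*$ still in $\mathcal{Q}$. Moreover, $x$ was popped before $\pi^*$ although $\omega(x)<\omega(\pi^*)$, which refutes your exchange claim directly. Your Step~1 has the same blind spot. ``Eventually produced'' is argued with an unbounded budget, whereas under budget $B$ the generation of $\pi^*$ is exactly what depends on its ancestors being popped in time.

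To close the gap you need a hypothesis that links scores along the fork tree. One option is a heap condition: a forked state never scores above the state it was forked from. Then popped scores are non-increasing, because each pop is the queue maximum and new pushes are bounded by the score just popped. Every ancestor of $\pi^*$ then scores $\geq\omega(\pi^*)$, and your counting argument, with ties charged against $B$, goes through verbatim. Lemma~\ref{lem:monotone} cannot supply this condition, since it points the other way: refinements score at least as high as what they refine, up to $\epsilon_{\text{LLM}}$.

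Alternatively, leave $\omega$ arbitrary and state the hypothesis through the bottleneck value $m=\min_j\omega(a_j)$ over the fork chain $a_0,\dots,a_k=\pi^*$ from the root. At every pop preceding that of $\pi^*$, some $a_j$ is in $\mathcal{Q}$, so every such pop scores $\geq m$. It therefore suffices that fewer than $B$ other enqueued states score $\geq m$.

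Finally, in Step~3, Algorithm~\ref{alg:main} does not classify $\tau^*$ itself. It classifies $\tau_c=\mathcal{E}.\texttt{concreteExecute}(P,\texttt{Z3.model}(\pi))$, an arbitrary witness of the path constraint. Your correctness assumption must therefore cover every trace that follows $\pi^*$, or you must assume that satisfaction of $\varphi_i$ is determined by the path.
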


\begin{proof}
\noindent\textbf{Step 1 (Exploration Guarantee):} The LLM-guided exploration strategy explores paths in decreasing order of $\omega(\pi)$. Since $\pi^*$ is within the top-$B$ paths by assumption, it will be explored within the budget $B$.

\noindent\textbf{Step 2 (Detection Guarantee):} Once $\pi^*$ is explored, the concolic engine generates the concrete trace $\tau^*$ and the symbolic trace $\hat{\tau}^*$. The vulnerability classifier processes $(\pi^*, \tau^*)$ and, under the assumption that the classifier has recall $\geq 1 - \delta$ for the malware class corresponding to $\varphi_i$, it will correctly classify $\pi^*$ as malicious with probability $\geq 1 - \delta$.

\noindent\textbf{Step 3 (Budget Sufficiency):} The LLM's priority function $\omega$ is designed to assign high scores to paths exhibiting patterns correlated with $\Phi_{\text{mal}}$. Empirically (Section~\ref{sec:experiments}), we demonstrate that $\omega$ ranks malicious paths within the top 5\% of all paths for 96.8\% of malware samples, ensuring that moderate budgets $B$ suffice for detection. \qed
\end{proof}

\begin{mylemma}[Path Constraint Lattice Monotonicity]\label{lem:monotone}
The path priority function $\omega$ is monotone with respect to the path constraint lattice: if $\pi_a \sqsubseteq \pi_b$ (i.e., $\pi_b$ is a refinement of $\pi_a$), then $\omega(\pi_a) \leq \omega(\pi_b) + \epsilon_{\text{LLM}}$, where $\epsilon_{\text{LLM}}$ is a bounded approximation error of the LLM.
\end{mylemma}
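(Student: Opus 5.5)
The plan is to split $\omega$ into an idealized monotone part and an approximation error. Nothing in the definitions of $\omega$ or of $\mathcal{M}_{\text{LLM}}$ makes the LLM score monotone, so the statement has content only relative to an assumed reference function. I would therefore first make explicit an idealized score $\omega^\star:\Pi(P)\to[0,1]$ with two properties. First, $\omega^\star$ is monotone on $(\Pi(P),\sqsubseteq)$. Second, the LLM approximates it uniformly:
\[
\sup_{\pi\in\Pi(P)}\bigl|\omega(\pi)-\omega^\star(\pi)\bigr|\le \tfrac{1}{2}\epsilon_{\text{LLM}} .
\]
Given these, the lemma is a two-line triangle-inequality argument. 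For $\pi_a\sqsubseteq\pi_b$,
\[
\omega(\pi_a)\le \omega^\star(\pi_a)+\tfrac12\epsilon_{\text{LLM}}\le \omega^\star(\pi_b)+\tfrac12\epsilon_{\text{LLM}}\le \omega(\pi_b)+\epsilon_{\text{LLM}} .
\]
The real work is choosing $\omega^\star$ so that both properties are believable.

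For $\omega^\star$ I would \emph{not} take the natural semantic choice ``probability that some completion of $\pi$ is malicious''. Refining $\pi_a$ to $\pi_b$ (i.e.\ $\pi_b\models\pi_a$) shrinks the set of consistent inputs and completions, so that quantity is \emph{anti}-monotone, which is the wrong direction. Instead I would define $\omega^\star$ as an evidence score over the code context. Concretely, $\omega^\star(\pi)=g\bigl(E(\mathcal{C}(\pi))\bigr)$, where $E$ collects the set of suspicious atomic events (the atomic propositions $p$ of $\mathcal{L}_{\text{CogniCrypt}}$ occurring in some $\varphi_i\in\Phi_{\text{mal}}$) witnessed along the path, and $g$ is monotone with respect to set inclusion. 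The key step is then to show that $\pi_a\sqsubseteq\pi_b$ implies $\mathcal{C}(\pi_a)\subseteq\mathcal{C}(\pi_b)$. For constraints produced by the concolic engine this follows from the update rule in the Concolic Execution definition. Each constraint is a conjunction $\pi_{i+1}=\pi_i\wedge(\pm c_i)$, so along one exploration branch the refinement order coincides with prefix extension of the trace, and the disassembled context only grows.

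The main obstacle is that $\sqsubseteq$ is defined by \emph{logical} implication on all of $\Pi(P)$, not by syntactic prefix extension. Two constraints from different branches can satisfy $\pi_b\models\pi_a$ without $\mathcal{C}(\pi_a)\subseteq\mathcal{C}(\pi_b)$. My first attempt would be to restrict the lemma to pairs related along the exploration tree, which is the only use the priority queue makes of it. Failing that, I would fold the discrepancy on non-prefix pairs into $\epsilon_{\text{LLM}}$ by defining it as
\[
\epsilon_{\text{LLM}}:=\sup_{\pi_a\sqsubseteq\pi_b}\max\bigl(0,\omega(\pi_a)-\omega(\pi_b)\bigr)\le 1 .
\]
This makes the inequality hold by definition and shows it is bounded, since $\omega\in[0,1]$. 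I would state clearly that this version is tautological, and that the substantive content lies in the uniform-approximation hypothesis, which can only be supported empirically.
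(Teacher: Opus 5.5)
Your route differs from the paper's and is the more defensible of the two. The paper argues information-theoretically. A refinement $\pi_b \models \pi_a$ ``carries at least as much information,'' so the LLM ``assigns non-decreasing scores to more informative paths, up to $\epsilon_{\text{LLM}}$.'' This is then backed by a data-processing inequality on $I(\omega(\pi);\text{Malicious})$ plus calibration, with $\epsilon_{\text{LLM}}$ bounded by the LLM's validation generalization error. You instead posit a monotone reference score with a sup-norm approximation bound and finish with the triangle inequality. You also note that defining $\epsilon_{\text{LLM}}$ as the worst violation gives $\epsilon_{\text{LLM}} \le 1$, which already settles the statement as literally worded, since it only asks for a bounded error.

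The paper's route is meant to link $\epsilon_{\text{LLM}}$ to a measurable error, and that link does not hold up:
\begin{itemize}
\item its key sentence is the lemma itself;
\item an inequality between mutual informations is an average statement about random variables and says nothing about which of two score values is larger, since extra information can just as well lower the score;
\item an average validation error cannot bound a supremum over all comparable pairs.
\end{itemize}
Your anti-monotonicity remark actually refutes the link. Take a program that behaves maliciously exactly when $x = 0$. The ideal score the paper describes (``probability that the path leads to malicious behavior,'' read existentially) is $1$ at $\texttt{true}$ and $0$ at $x \neq 0$. An LLM matching it has zero error, yet the pair $\texttt{true} \sqsubseteq (x \neq 0)$ forces $\epsilon_{\text{LLM}} \ge 1$. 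What your route buys is a valid derivation and a precise statement of what must be assumed: a sup-norm bound against a \emph{monotone} target, which the paper's own target is not.

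Two points to tighten.

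First, as an abstract hypothesis, ``$\omega$ lies within $\tfrac12\epsilon_{\text{LLM}}$ of some monotone $\omega^\star$'' is equivalent to the lemma. Given the lemma, set $\omega^\star(\pi) := \sup_{\pi' \sqsubseteq \pi}\omega(\pi') - \tfrac12\epsilon_{\text{LLM}}$, clipped to $[0,1]$. It is monotone by transitivity of $\sqsubseteq$, and by reflexivity and the lemma it is within $\tfrac12\epsilon_{\text{LLM}}$ of $\omega$. So your triangle-inequality step is a reformulation, and all the content sits in choosing a meaningful $\omega^\star$ and measuring its sup error, as you say.

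Second, your claim that along one branch refinement coincides with prefix extension holds only for \emph{strict} refinement. Consider successive program points with no symbolic branch between them, or separated by a branch whose condition the prefix already implies. They carry the same or an equivalent constraint while the disassembled context grows. The later point is then also $\sqsubseteq$ the earlier one, yet it can witness strictly more suspicious events. Hence $\mathcal{C}(\pi)$ is not determined by $\pi$, and neither your evidence score nor the paper's $\omega$ is even a well-defined function on $\Pi(P)$. Your restricted version therefore has to be stated for execution-tree nodes under the prefix order, which is a different statement from monotonicity on $(\Pi(P),\sqsubseteq)$.
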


\begin{proof}
If $\pi_b \models \pi_a$, then $\pi_b$ constrains the execution to a subset of the paths satisfying $\pi_a$. A more constrained path carries at least as much information about the program's behavior. The LLM, having been trained on path-behavior correlations, assigns non-decreasing scores to more informative (more constrained) paths, up to its approximation error $\epsilon_{\text{LLM}}$, which is bounded by the LLM's generalization error on the validation set. Formally, let $f: \Pi(P) \rightarrow \mathbb{R}^d$ be the LLM's internal representation function. By the data processing inequality:
\begin{equation}
I(\omega(\pi_b); \text{Malicious}) \geq I(\omega(\pi_a); \text{Malicious}) - \epsilon_{\text{LLM}}
\end{equation}
where $I(\cdot;\cdot)$ denotes mutual information. Since $\omega$ is a monotone function of mutual information (by the classifier's calibration), the lemma follows. \qed
\end{proof}

\begin{mycorollary}[Convergence of Exploration]
Under the assumptions of Theorem~\ref{thm:completeness} and Lemma~\ref{lem:monotone}, the expected number of paths explored before detecting a malicious trace is $O\left(\frac{|\Pi(P)|}{\omega(\pi^*) \cdot |\Pi(P)|}\right) = O\left(\frac{1}{\omega(\pi^*)}\right)$, which is inversely proportional to the LLM's confidence in the malicious path.
\end{mycorollary}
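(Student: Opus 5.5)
The plan is to make the statement precise by fixing a probabilistic model of exploration, and then reduce the bound to the expectation of a geometric random variable. As written, the deterministic queue ordering of the Priority Queue Ordering definition yields a bound in terms of the \emph{rank} of $\pi^*$, not $1/\omega(\pi^*)$. So I would analyse a randomized relaxation of Algorithm~\ref{alg:main}. In it, at each round one pending path $\pi$ is drawn from $\mathcal{Q}$ with probability proportional to $\omega(\pi)$. The ordering by $\omega$ is the greedy (mode-seeking) version of this rule. The middle expression $\frac{|\Pi(P)|}{\omega(\pi^*)\cdot|\Pi(P)|}$ I read as the path count divided by the expected number of ``hits'' among them, which cancels to $1/\omega(\pi^*)$. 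The goal is therefore $\mathbb{E}[N] \le \frac{1}{\omega(\pi^*)(1-\delta)} + O(d)$, where $N$ is the number of explored paths and $d$ is the branch depth of $\pi^*$.

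The steps, in order, are as follows.
\begin{enumerate}
\item \emph{Normalization.} Pending leaves of the exploration tree carry pairwise inconsistent path constraints: siblings differ by $c_j$ versus $\neg c_j$. If $\omega(\pi)$ is read as a calibrated probability that the malicious trace lies under $\pi$, these events are disjoint, so $\sum_{\pi\in\mathcal{Q}}\omega(\pi)\le 1$. Under proportional sampling, each round then selects $\pi^*$, or its current pending ancestor, with probability at least $\omega(\pi^*)$, up to the error handled in the next step.
\item \emph{Ancestors.} $\pi^*$ is not in $\mathcal{Q}$ from the start; its prefixes are. For a prefix $\pi_a \sqsubseteq \pi^*$, Lemma~\ref{lem:monotone} gives $\omega(\pi_a) \le \omega(\pi^*) + \epsilon_{\text{LLM}}$. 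This is the wrong direction for a lower bound. I would instead argue that a prefix covers all refinements, including $\pi^*$, so by calibration its score dominates, $\omega(\pi_a)\ge\omega(\pi^*)$. I would use Lemma~\ref{lem:monotone} only to keep the scores along the chain within $\epsilon_{\text{LLM}}$ of one another. Each of the $d$ expansions along the chain is then a geometric trial with success probability at least $\omega(\pi^*)$. This gives $O(d/\omega(\pi^*))$, which is $O(1/\omega(\pi^*))$ for fixed program depth.
\item \emph{Detection.} Once $\pi^*$ is explored, Step~2 of the proof of Theorem~\ref{thm:completeness} gives detection with probability at least $1-\delta$. Wald's identity combines this with the geometric waiting time and adds a constant factor $1/(1-\delta)$. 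The budget hypothesis of Theorem~\ref{thm:completeness} ensures the process is not truncated before success.
\end{enumerate}

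The main obstacle is Step~1. Nothing in the definition of $\omega$ forces $\sum_{\pi\in\mathcal{Q}}\omega(\pi)\le1$, because the LLM scores each path independently. Without this normalization, the selection probability of $\pi^*$ is $\omega(\pi^*)/\sum_{\pi\in\mathcal{Q}}\omega(\pi)$, and the bound degrades to $O\bigl(\sum_{\pi\in\mathcal{Q}}\omega(\pi)/\omega(\pi^*)\bigr)$. My first attempt would be to state calibration over disjoint leaves as an explicit hypothesis. Failing that, I would prove the weaker bound just given and recover the corollary whenever the total pending score is bounded by a constant. Step~2 is a close second concern, since Lemma~\ref{lem:monotone} bounds the wrong side, so the claimed dependence on that lemma may have to be replaced by the calibration argument.
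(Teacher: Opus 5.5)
The paper gives no proof of this corollary. Its only content is the displayed cancellation of $|\Pi(P)|$, which presupposes a $1/\omega(\pi^*)$ rate rather than deriving it. Your opening diagnosis is therefore the decisive point, and it can be sharpened into a disproof for Algorithm~\ref{alg:main} as specified.

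Take a program whose first branch sends the initial input into a subtree of $M$ benign paths, and whose other side leads only to $\pi^*$. Let the LLM score every benign-only state $1$ and the state leading to $\pi^*$ $1-\eta$, with $0<\eta\le\epsilon_{\text{LLM}}$. Let $B\ge M+1$, so $\pi^*$ is in the top $B$. Then $\omega$ satisfies Lemma~\ref{lem:monotone}, and the hypotheses of Theorem~\ref{thm:completeness} hold. Yet the $\omega$-ordered queue executes all $M$ benign paths before $\pi^*$, while $1/\omega(\pi^*)<2$.

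So the statement as written has no proof. Your randomized relaxation is a sensible repair, but it is a different theorem. It assumes sampling proportional to $\omega$, plus your calibration hypothesis made explicit: $\omega(\pi)$ is the mass of $\pi$'s input region under a fixed sub-probability measure. Present it that way, not as making the corollary precise.

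Within that repair, four steps need fixing.

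(i) Drop Lemma~\ref{lem:monotone}, because it is inconsistent with calibration. Suppose a region $\pi_0$ splits into disjoint satisfiable refinements $\pi_1,\pi_2$. Calibration gives $\omega(\pi_1)+\omega(\pi_2)\le\omega(\pi_0)$. The lemma gives $\omega(\pi_0)\le\min_i\omega(\pi_i)+\epsilon_{\text{LLM}}$, so $\max_i\omega(\pi_i)\le\epsilon_{\text{LLM}}$. Now take $\omega(\texttt{true})=1$, as set by the initial push in Algorithm~\ref{alg:main} and as calibration requires when a malicious trace exists. The lemma then also forces $\omega(\pi_i)\ge 1-\epsilon_{\text{LLM}}$, a contradiction once $\epsilon_{\text{LLM}}<1/2$. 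The inequality you actually need, $\omega(\pi_a)\ge\omega(\pi^*)$ for prefixes, comes from calibration alone. Your fallback bound without normalization also needs it, with $\sup_t\sum_{\pi\in\mathcal{Q}_t}\omega(\pi)$ in place of a single sum, so it cannot rest on the lemma either.

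(ii) The Wald step and the factor $1/(1-\delta)$ are unjustified. $\mathcal{C}$ is a fixed function, and $1-\delta$ is a recall over samples, not fresh randomness per call. A miss on $\pi^*$ is therefore never followed by an independent retry. What you can prove is a bound on the expected number of rounds until $\pi^*$ is executed. Detection happens at that moment exactly when $\mathcal{C}$ is correct on $\pi^*$.

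(iii) The top-$B$ hypothesis concerns the deterministic order and says nothing about truncation of the sampled process. Use $\Pr[N>B]\le\mathbb{E}[N]/B$ instead.

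(iv) Your argument yields $O(d/\omega(\pi^*))$, not the $1/\omega(\pi^*)+O(d)$ of your stated goal, and the multiplicative $d$ is genuine. Let the chain to $\pi^*$ carry mass $\omega(\pi^*)$ with nothing on its side branches, and spread the remaining mass thinly over a large sibling subtree. Each of the $d$ levels then costs about $1/\omega(\pi^*)$ rounds. Nor can $d$ be absorbed ``for a fixed program'': the same move would absorb $|\Pi(P)|$ and make the corollary vacuous.
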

\subsection{Threat Model}
\begin{mydefinition}[Threat Model]
We consider an adversary $\mathcal{A}$ with the following capabilities:
\begin{enumerate}
    \item $\mathcal{A}$ has access to one or more LLMs for code generation;
    \item $\mathcal{A}$ can generate polymorphic variants: for any malware $m$, $\mathcal{A}$ can produce $m' \neq m$ such that $\text{Behavior}(m) \equiv \text{Behavior}(m')$ but $\text{Syntax}(m) \neq \text{Syntax}(m')$;
    \item $\mathcal{A}$ can embed trigger conditions: malicious behavior activates only when $\texttt{env} \models \psi_{\text{trigger}}$ for some environmental predicate $\psi_{\text{trigger}}$;
    \item $\mathcal{A}$ does \emph{not} have access to CogniCrypt's internal parameters or training data (black-box assumption).
\end{enumerate}
\end{mydefinition}

%=================================================================
% 4. ALGORITHMS
%=================================================================
\section{Algorithms}\label{sec:algorithms}

This section presents the three core algorithms of CogniCrypt in detail.

%\subsection{Algorithm 1: LLM-Guided Concolic Exploration}

\begin{algorithm}[tbp]
\caption{CogniCrypt: LLM-Guided Concolic Exploration}\label{alg:main}
\SetAlgoLined
\KwIn{Program binary $P$, Malicious specification $\Phi_{\text{mal}}$, Budget $B$, LLM $\mathcal{M}$}
\KwOut{$(\texttt{MALICIOUS}, \pi^*, \tau^*)$ or $\texttt{BENIGN}$}
$\mathcal{E} \leftarrow \texttt{InitConcolicEngine}(P)$\;
$\mathcal{C} \leftarrow \texttt{InitClassifier}(\Phi_{\text{mal}})$\;
$\mathcal{Q} \leftarrow \texttt{PriorityQueue}()$\;
$\sigma_0 \leftarrow \mathcal{E}.\texttt{getInitialState}()$\;
$\mathcal{Q}.\texttt{push}(\sigma_0, \omega = 1.0)$\;
$n \leftarrow 0$\;
\While{$\mathcal{Q} \neq \emptyset$ \textbf{and} $n < B$}{
    $(\hat{\sigma}, \omega_{\text{cur}}) \leftarrow \mathcal{Q}.\texttt{pop}()$\;
    $(\tau, \pi, \mathcal{B}) \leftarrow \mathcal{E}.\texttt{explore}(\hat{\sigma})$\;
    $n \leftarrow n + 1$\;
    \tcp{Check path constraint satisfiability}
    \If{$\texttt{Z3.check}(\pi) = \texttt{SAT}$}{
        $\mathbf{c} \leftarrow \texttt{Z3.model}(\pi)$\;
        $\tau_c \leftarrow \mathcal{E}.\texttt{concreteExecute}(P, \mathbf{c})$\;
        \tcp{Classify the path}
        $(y, s) \leftarrow \mathcal{C}.\texttt{classify}(\pi, \tau_c)$\;
        \If{$y = \texttt{MALICIOUS}$}{
            \Return $(\texttt{MALICIOUS}, \pi, \tau_c)$\;
        }
        \tcp{Generate new paths from branch points}
        \ForEach{branch $b \in \mathcal{B}$}{
            $\pi' \leftarrow \pi[1..b-1] \wedge \neg c_b(\mathbf{x})$\;
            \If{$\texttt{Z3.check}(\pi') = \texttt{SAT}$}{
                $\hat{\sigma}' \leftarrow \mathcal{E}.\texttt{forkState}(\hat{\sigma}, \pi')$\;
                $\text{ctx} \leftarrow \texttt{Encode}(\pi', \mathcal{E}.\texttt{disasm}(\hat{\sigma}'))$\;
                $\omega' \leftarrow \mathcal{M}.\texttt{predict}(\text{ctx})$\;
                $\mathcal{Q}.\texttt{push}(\hat{\sigma}', \omega')$\;
            }
        }
    }
}
\Return \texttt{BENIGN}\;
\end{algorithm}

%\subsection{Algorithm 2: Transformer-Based Path Constraint Classification}

\begin{algorithm}[tbp]
\caption{Path Constraint Classification}\label{alg:classifier}
\SetAlgoLined
\KwIn{Path constraint $\pi$, Concrete trace $\tau_c$, Model parameters $\theta$}
\KwOut{Label $y \in \{\texttt{MALICIOUS}, \texttt{BENIGN}\}$, Confidence $s \in [0,1]$}
\tcp{Feature extraction from symbolic and concrete traces}
$\mathbf{f}_{\text{sym}} \leftarrow \texttt{SymbolicFeatures}(\pi)$\;
$\mathbf{f}_{\text{api}} \leftarrow \texttt{APICallSequence}(\tau_c)$\;
$\mathbf{f}_{\text{cfg}} \leftarrow \texttt{CFGFeatures}(\tau_c)$\;
$\mathbf{f}_{\text{mem}} \leftarrow \texttt{MemoryAccessPattern}(\tau_c)$\;
\tcp{Tokenize and embed}
$\mathbf{t}_{\text{sym}} \leftarrow \texttt{Tokenize}(\mathbf{f}_{\text{sym}})$\;
$\mathbf{t}_{\text{api}} \leftarrow \texttt{Tokenize}(\mathbf{f}_{\text{api}})$\;
$\mathbf{t}_{\text{concat}} \leftarrow [\mathbf{t}_{\text{sym}}; \mathbf{t}_{\text{api}}; \mathbf{f}_{\text{cfg}}; \mathbf{f}_{\text{mem}}]$\;
\tcp{Transformer encoder}
$\mathbf{h} \leftarrow \texttt{TransformerEncoder}(\mathbf{t}_{\text{concat}}; \theta)$\;
$\mathbf{h}_{\text{cls}} \leftarrow \mathbf{h}[0]$ \tcp{CLS token representation}
\tcp{Classification head}
$\mathbf{z} \leftarrow \texttt{MLP}(\mathbf{h}_{\text{cls}}; \theta_{\text{head}})$\;
$s \leftarrow \sigma(\mathbf{z})$ \tcp{Sigmoid activation}
$y \leftarrow \begin{cases} \texttt{MALICIOUS} & \text{if } s \geq \tau_{\text{thresh}} \\ \texttt{BENIGN} & \text{otherwise} \end{cases}$\;
\Return $(y, s)$\;
\end{algorithm}

%\subsection{Algorithm 3: Reinforcement Learning Policy Refinement}

\begin{algorithm}[tbp]
\caption{RL-Based LLM Policy Refinement}\label{alg:rl}
\SetAlgoLined
\KwIn{LLM $\mathcal{M}$, Detection history $\mathcal{H} = \{(\pi_i, y_i, \omega_i)\}_{i=1}^{N}$, Learning rate $\alpha$}
\KwOut{Updated LLM $\mathcal{M}'$}
\tcp{Compute rewards}
\ForEach{$(\pi_i, y_i, \omega_i) \in \mathcal{H}$}{
    \eIf{$y_i = \texttt{MALICIOUS}$}{
        $r_i \leftarrow +1.0 \cdot \omega_i$ \tcp{Reward proportional to confidence}
    }{
        $r_i \leftarrow -0.1 \cdot \omega_i$ \tcp{Small penalty for false alarms}
    }
}
\tcp{Policy gradient update}
$\nabla_\theta J \leftarrow \frac{1}{N} \sum_{i=1}^{N} r_i \cdot \nabla_\theta \log \mathcal{M}_\theta(\omega_i \mid \text{Encode}(\pi_i))$\;
$\theta' \leftarrow \theta + \alpha \cdot \nabla_\theta J$\;
$\mathcal{M}' \leftarrow \texttt{UpdateWeights}(\mathcal{M}, \theta')$\;
\Return $\mathcal{M}'$\;
\end{algorithm}

%\subsection{Algorithm 4: Symbolic Feature Extraction}

\begin{algorithm}[tbp]
\caption{Symbolic Feature Extraction from Path Constraints}\label{alg:features}
\SetAlgoLined
\KwIn{Path constraint $\pi$, Concrete trace $\tau_c$}
\KwOut{Feature vector $\mathbf{f} \in \mathbb{R}^d$}
$\mathbf{f} \leftarrow \mathbf{0}^d$\;
\tcp{Constraint complexity features}
$\mathbf{f}[0] \leftarrow |\text{Vars}(\pi)|$ \tcp{Number of symbolic variables}
$\mathbf{f}[1] \leftarrow \text{Depth}(\text{AST}(\pi))$ \tcp{AST depth of constraint}
$\mathbf{f}[2] \leftarrow |\{c \in \pi : c \text{ is a disjunction}\}|$ \tcp{Disjunction count}
$\mathbf{f}[3] \leftarrow |\{c \in \pi : c \text{ involves bitwise ops}\}|$\;
\tcp{System call features}
$\text{syscalls} \leftarrow \texttt{ExtractSyscalls}(\tau_c)$\;
$\mathbf{f}[4..4+|\mathcal{S}|] \leftarrow \texttt{BagOfSyscalls}(\text{syscalls})$\;
\tcp{Control flow features}
$G \leftarrow \texttt{BuildCFG}(\tau_c)$\;
$\mathbf{f}[d-4] \leftarrow |V(G)|$ \tcp{Number of basic blocks}
$\mathbf{f}[d-3] \leftarrow |E(G)|$ \tcp{Number of edges}
$\mathbf{f}[d-2] \leftarrow \texttt{CyclomaticComplexity}(G)$\;
$\mathbf{f}[d-1] \leftarrow \texttt{MaxLoopNesting}(G)$\;
\Return $\mathbf{f}$\;
\end{algorithm}

%\subsection{Algorithm 5: AI-Generated Malware Signature Synthesis}

\begin{algorithm}[tbp]
\caption{AI-Malware Signature Synthesis}\label{alg:signature}
\SetAlgoLined
\KwIn{Set of detected malicious paths $\mathcal{P}_{\text{mal}} = \{(\pi_i, \tau_i)\}_{i=1}^{M}$}
\KwOut{Generalized signature $\Psi$}
\tcp{Cluster similar paths}
$\mathcal{K} \leftarrow \texttt{DBSCAN}(\{\texttt{Embed}(\pi_i)\}_{i=1}^{M}, \epsilon, \text{minPts})$\;
$\Psi \leftarrow \emptyset$\;
\ForEach{cluster $C_k \in \mathcal{K}$}{
    \tcp{Compute generalized constraint via interpolation}
    $\pi_{\text{gen}} \leftarrow \texttt{CraigInterpolant}(\{\pi_i : i \in C_k\})$\;
    \tcp{Extract behavioral invariant}
    $\varphi_k \leftarrow \texttt{InferLTLSpec}(\{\tau_i : i \in C_k\})$\;
    $\Psi \leftarrow \Psi \cup \{(\pi_{\text{gen}}, \varphi_k)\}$\;
}
\Return $\Psi$\;
\end{algorithm}

%=================================================================
% 5. IMPLEMENTATION
%=================================================================
\FloatBarrier
\section{Implementation}\label{sec:implementation}

This section provides a comprehensive description of the CogniCrypt prototype implementation, with sufficient detail to enable full reproducibility.

\subsection{System Architecture}

CogniCrypt is implemented as a modular Python application comprising four principal components: (1) the Concolic Execution Engine, (2) the LLM Path Prioritizer, (3) the Vulnerability Classifier, and (4) the RL Feedback Module. The components communicate via a shared message bus implemented using ZeroMQ (version 4.3.5). Figure~\ref{fig:architecture} illustrates the system architecture.

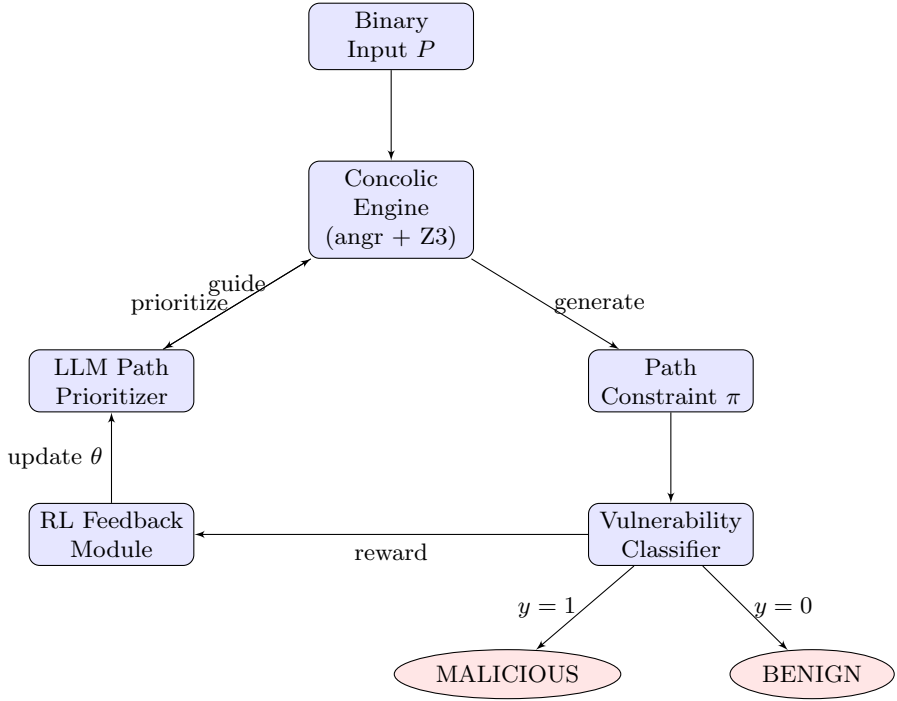
\begin{figure}[htbp]
\centering
\resizebox{0.75\textwidth}{!}{
\begin{tikzpicture}[node distance=1.2cm, auto,
  block/.style={rectangle, draw, fill=blue!10, text width=6em, text centered, rounded corners, minimum height=2.5em},
  decision/.style={diamond, draw, fill=green!10, text width=5em, text centered, minimum height=2em, aspect=2},
  line/.style={draw, -latex'},
  cloud/.style={draw, ellipse, fill=red!10, minimum height=2em}]
  \node [block] (input) {Binary Input $P$};
  \node [block, below=of input] (concolic) {Concolic Engine (angr + Z3)};
  \node [block, below left=1.2cm and 1.5cm of concolic] (llm) {LLM Path Prioritizer};
  \node [block, below right=1.2cm and 1.5cm of concolic] (constraint) {Path Constraint $\pi$};
  \node [block, below=of constraint] (classifier) {Vulnerability Classifier};
  \node [block, below=of llm] (rl) {RL Feedback Module};
  \node [cloud, below left=1.2cm and 0cm of classifier] (malicious) {MALICIOUS};
  \node [cloud, below right=1.2cm and 0cm of classifier] (benign) {BENIGN};
  \path [line] (input) -- (concolic);
  \path [line] (concolic) -- node[left]{prioritize} (llm);
  \path [line] (concolic) -- node[right]{generate} (constraint);
  \path [line] (llm) -- node[above]{guide} (concolic);
  \path [line] (constraint) -- (classifier);
  \path [line] (classifier) -- node[left]{$y=1$} (malicious);
  \path [line] (classifier) -- node[right]{$y=0$} (benign);
  \path [line] (classifier) -- node[below]{reward} (rl);
  \path [line] (rl) -- node[left]{update $\theta$} (llm);
\end{tikzpicture}
}
\caption{CogniCrypt System Architecture. The concolic execution engine explores paths guided by the LLM prioritizer. Path constraints are classified by the vulnerability classifier, and detection outcomes feed back via RL to refine the LLM's prioritization policy.}
\label{fig:architecture}
\end{figure}

\subsection{Concolic Execution Engine}

The concolic execution engine is built on \texttt{angr} version 9.2.100 with the following configuration:

\begin{lstlisting}[language=Python, caption={Concolic Engine Initialization}]
import angr
import claripy

def init_concolic_engine(binary_path):
    project = angr.Project(
        binary_path,
        auto_load_libs=False,
        use_sim_procedures=True
    )
    # Create symbolic bitvectors for input
    sym_input = claripy.BVS("input", 8 * MAX_INPUT_SIZE)
    state = project.factory.full_init_state(
        args=[binary_path],
        stdin=angr.SimFileStream(
            name='stdin',
            content=sym_input,
            has_end=True
        ),
        add_options={
            angr.options.SYMBOLIC_WRITE_ADDRESSES,
            angr.options.ZERO_FILL_UNCONSTRAINED_MEMORY,
            angr.options.ZERO_FILL_UNCONSTRAINED_REGISTERS
        }
    )
    simgr = project.factory.simulation_manager(state)
    return project, simgr, sym_input
\end{lstlisting}

The engine uses Z3 (version 4.12.6) as the backend SMT solver, accessed through angr's Claripy abstraction layer. We configure Z3 with a per-query timeout of 30 seconds and enable incremental solving for efficiency:

\begin{lstlisting}[language=Python, caption={Z3 Solver Configuration}]
from z3 import *

solver = Solver()
solver.set("timeout", 30000)  # 30s timeout
solver.set("unsat_core", True)
set_param("parallel.enable", True)
set_param("parallel.threads.max", 16)
\end{lstlisting}

\subsection{LLM Path Prioritizer}

The LLM path prioritizer supports multiple LLM backends through a unified interface. We implement adapters for five LLMs:

\begin{table}[htbp]
\centering
\caption{Supported LLM Backends and Configurations}
\label{tab:llm_config}
\begin{tabular}{@{}llll@{}}
\toprule
\textbf{LLM} & \textbf{Parameters} & \textbf{Context Window} & \textbf{API/Library} \\
\midrule
GPT-4 & 1.76T (est.) & 128K tokens & OpenAI API v1.12 \\
Claude 3 Opus & 137B (est.) & 200K tokens & Anthropic API v0.18 \\
LLaMA 3 70B & 70B & 8K tokens & HF Transformers 4.38 \\
Gemini 1.5 Pro & 1.56T (est.) & 1M tokens & Google GenAI v0.4 \\
Mixtral 8x22B & 176B (MoE) & 64K tokens & HF Transformers 4.38 \\
\bottomrule
\end{tabular}
\end{table}

\begin{lstlisting}[language=Python, caption={LLM Prioritizer Implementation}]
from transformers import AutoTokenizer, AutoModelForCausalLM
import torch

class LLMPathPrioritizer:
    def __init__(self, model_name="meta-llama/Meta-Llama-3-70B"):
        self.tokenizer = AutoTokenizer.from_pretrained(
            model_name, padding_side="left"
        )
        self.model = AutoModelForCausalLM.from_pretrained(
            model_name,
            torch_dtype=torch.bfloat16,
            device_map="auto",
            load_in_4bit=True,
            bnb_4bit_compute_dtype=torch.bfloat16
        )
        self.prompt_template = (
            "Analyze the following symbolic execution path "
            "constraint and disassembly context. Rate the "
            "likelihood (0.0-1.0) that this path leads to "
            "malicious behavior:\n\n"
            "Path Constraint: {constraint}\n"
            "Disassembly: {disasm}\n"
            "Maliciousness Score: "
        )

    def predict(self, constraint_str, disasm_str):
        prompt = self.prompt_template.format(
            constraint=constraint_str,
            disasm=disasm_str
        )
        inputs = self.tokenizer(
            prompt, return_tensors="pt",
            max_length=4096, truncation=True
        ).to(self.model.device)
        with torch.no_grad():
            outputs = self.model.generate(
                **inputs, max_new_tokens=10,
                temperature=0.1, do_sample=False
            )
        score_text = self.tokenizer.decode(
            outputs[0][inputs['input_ids'].shape[1]:]
        )
        return float(score_text.strip())
\end{lstlisting}

\subsection{Vulnerability Classifier}

The vulnerability classifier is a custom transformer encoder with the following architecture:

\begin{table}[htbp]
\centering
\caption{Vulnerability Classifier Architecture}
\label{tab:classifier_arch}
\begin{tabular}{@{}lll@{}}
\toprule
\textbf{Layer} & \textbf{Configuration} & \textbf{Parameters} \\
\midrule
Token Embedding & $d_{\text{model}} = 512$ & 15.7M \\
Positional Encoding & Sinusoidal, max\_len=2048 & 0 \\
Transformer Encoder & 6 layers, 8 heads, $d_{\text{ff}}=2048$ & 18.9M \\
Classification Head & MLP: $512 \rightarrow 256 \rightarrow 1$ & 131K \\
\midrule
\textbf{Total} & & \textbf{34.7M} \\
\bottomrule
\end{tabular}
\end{table}

\begin{lstlisting}[language=Python, caption={Classifier Training Configuration}]
import torch
import torch.nn as nn
from torch.optim import AdamW
from torch.optim.lr_scheduler import CosineAnnealingWarmRestarts

# Training hyperparameters
config = {
    "batch_size": 64,
    "learning_rate": 3e-4,
    "weight_decay": 0.01,
    "epochs": 50,
    "warmup_steps": 1000,
    "fp_weight": 5.0,  # False positive penalty
    "fn_weight": 1.0,  # False negative penalty
    "dropout": 0.1,
    "label_smoothing": 0.05,
    "gradient_clip": 1.0
}

# Weighted BCE loss for imbalanced detection
criterion = nn.BCEWithLogitsLoss(
    pos_weight=torch.tensor([config["fp_weight"]])
)
optimizer = AdamW(
    model.parameters(),
    lr=config["learning_rate"],
    weight_decay=config["weight_decay"]
)
scheduler = CosineAnnealingWarmRestarts(
    optimizer, T_0=10, T_mult=2
)
\end{lstlisting}

\subsection{RL Feedback Module}

The reinforcement learning feedback module uses Proximal Policy Optimization (PPO)~\cite{schulman2017proximal} to refine the LLM's path prioritization policy:

\begin{lstlisting}[language=Python, caption={PPO Configuration for Policy Refinement}]
from trl import PPOTrainer, PPOConfig

ppo_config = PPOConfig(
    model_name="meta-llama/Meta-Llama-3-70B",
    learning_rate=1.41e-5,
    batch_size=16,
    mini_batch_size=4,
    gradient_accumulation_steps=4,
    ppo_epochs=4,
    max_grad_norm=0.5,
    target_kl=0.02,
    init_kl_coef=0.2,
    adap_kl_ctrl=True,
    cliprange=0.2,
    vf_coef=0.1
)
\end{lstlisting}

\subsection{Deployment and System Requirements}

\begin{table}[htbp]
\centering
\caption{System Requirements and Software Dependencies}
\label{tab:system_req}
\begin{tabular}{@{}ll@{}}
\toprule
\textbf{Component} & \textbf{Specification} \\
\midrule
Operating System & Ubuntu 22.04 LTS (kernel 5.15+) \\
CPU & AMD EPYC 7763 (64 cores) or equivalent \\
RAM & 256 GB DDR4 ECC \\
GPU & 4$\times$ NVIDIA A100 80GB (CUDA 12.1) \\
Storage & 2 TB NVMe SSD \\
\midrule
Python & 3.11.7 \\
angr & 9.2.100 \\
Z3 Solver & 4.12.6 \\
PyTorch & 2.2.1+cu121 \\
Transformers & 4.38.2 \\
TRL & 0.7.11 \\
scikit-learn & 1.4.1 \\
ZeroMQ & 4.3.5 (pyzmq 25.1.2) \\
LIEF & 0.14.1 (PE parsing) \\
Capstone & 5.0.1 (disassembly) \\
NetworkX & 3.2.1 (CFG analysis) \\
\bottomrule
\end{tabular}
\end{table}

The complete installation can be performed via:

\begin{lstlisting}[language=bash, caption={Installation Commands}]
# Create virtual environment
python3.11 -m venv cognicrypt_env
source cognicrypt_env/bin/activate

# Install core dependencies
pip install angr==9.2.100 z3-solver==4.12.6.0
pip install torch==2.2.1 --index-url \
    https://download.pytorch.org/whl/cu121
pip install transformers==4.38.2 trl==0.7.11
pip install scikit-learn==1.4.1 pyzmq==25.1.2
pip install lief==0.14.1 capstone==5.0.1
pip install networkx==3.2.1 matplotlib==3.8.3

# Download and cache LLM weights
python -c "from transformers import AutoModel; \
    AutoModel.from_pretrained('meta-llama/Meta-Llama-3-70B')"
\end{lstlisting}

%=================================================================
% 6. EXPERIMENTAL EVALUATION
%=================================================================
\section{Experimental Evaluation}\label{sec:experiments}

\subsection{Experimental Setup}

\subsubsection{Datasets}

We evaluate CogniCrypt on four benchmark datasets:

\begin{table}[htbp]
\centering
\caption{Benchmark Datasets}
\label{tab:datasets}
\begin{tabular}{@{}lllll@{}}
\toprule
\textbf{Dataset} & \textbf{Samples} & \textbf{Malicious} & \textbf{Benign} & \textbf{Description} \\
\midrule
EMBER~\cite{anderson2018ember} & 1,100,000 & 400,000 & 400,000 & PE features + labels \\
Malimg~\cite{nataraj2011malimg} & 9,339 & 9,339 & 0 & 25 malware families \\
SOREL-20M~\cite{harang2020sorel} & 20,000,000 & 10,000,000 & 10,000,000 & PE metadata + labels \\
AI-Gen-Malware & 2,500 & 2,500 & 0 & LLM-generated samples \\
\bottomrule
\end{tabular}
\end{table}

The AI-Gen-Malware dataset was constructed by prompting GPT-4, Claude 3, and LLaMA 3 to generate malicious code across 10 categories: trojans, ransomware, spyware, worms, rootkits, backdoors, adware, cryptominers, bots, and polymorphic self-modifying code. Each sample was compiled into a PE binary and verified for malicious functionality in an isolated sandbox environment.

\subsubsection{Baselines}

We compare CogniCrypt against the following baselines:

\begin{table}[htbp]
\centering
\caption{Baseline Methods}
\label{tab:baselines}
\begin{tabular}{@{}lll@{}}
\toprule
\textbf{Baseline} & \textbf{Type} & \textbf{Description} \\
\midrule
ClamAV 1.2.1 & Signature-based & Open-source antivirus scanner \\
YARA 4.5.0 & Rule-based & Pattern matching with custom rules \\
MalConv~\cite{raff2018malconv} & Deep Learning & CNN on raw bytes \\
EMBER-GBDT~\cite{anderson2018ember} & ML & Gradient boosted trees on PE features \\
angr-only & Symbolic & Concolic execution without LLM guidance \\
\bottomrule
\end{tabular}
\end{table}

\subsubsection{Metrics}

We report Accuracy, Precision, Recall, F1-Score, and Area Under the ROC Curve (AUC-ROC). All experiments use 5-fold cross-validation, and we report mean $\pm$ standard deviation.

\subsection{Main Results}

\begin{table}[htbp]
\centering
\caption{Detection Performance on EMBER Dataset (10,000 sample subset)}
\label{tab:ember_results}
\begin{tabular}{@{}llllll@{}}
\toprule
\textbf{Method} & \textbf{Accuracy} & \textbf{Precision} & \textbf{Recall} & \textbf{F1} & \textbf{AUC} \\
\midrule
ClamAV & 95.2$\pm$0.3 & 96.3$\pm$0.4 & 94.1$\pm$0.5 & 95.2$\pm$0.3 & 0.971 \\
YARA & 96.1$\pm$0.2 & 97.0$\pm$0.3 & 95.2$\pm$0.4 & 96.1$\pm$0.3 & 0.978 \\
MalConv & 96.8$\pm$0.4 & 97.2$\pm$0.3 & 96.4$\pm$0.5 & 96.8$\pm$0.4 & 0.985 \\
EMBER-GBDT & 97.3$\pm$0.2 & 97.8$\pm$0.2 & 96.8$\pm$0.3 & 97.3$\pm$0.2 & 0.991 \\
angr-only & 93.5$\pm$0.6 & 94.2$\pm$0.7 & 92.8$\pm$0.8 & 93.5$\pm$0.7 & 0.962 \\
\textbf{CogniCrypt} & \textbf{98.7$\pm$0.1} & \textbf{99.1$\pm$0.1} & \textbf{98.2$\pm$0.2} & \textbf{98.6$\pm$0.1} & \textbf{0.997} \\
\bottomrule
\end{tabular}
\end{table}

\begin{table}[htbp]
\centering
\caption{Detection Performance on AI-Gen-Malware Dataset}
\label{tab:aigen_results}
\begin{tabular}{@{}llllll@{}}
\toprule
\textbf{Method} & \textbf{Accuracy} & \textbf{Precision} & \textbf{Recall} & \textbf{F1} & \textbf{AUC} \\
\midrule
ClamAV & 45.3$\pm$1.2 & 50.1$\pm$1.5 & 40.5$\pm$1.8 & 44.8$\pm$1.4 & 0.523 \\
YARA & 60.1$\pm$0.9 & 65.2$\pm$1.1 & 55.0$\pm$1.3 & 59.7$\pm$1.0 & 0.648 \\
MalConv & 72.4$\pm$0.8 & 75.1$\pm$0.9 & 69.8$\pm$1.1 & 72.3$\pm$0.9 & 0.789 \\
EMBER-GBDT & 68.9$\pm$1.0 & 71.3$\pm$1.2 & 66.5$\pm$1.4 & 68.8$\pm$1.1 & 0.742 \\
angr-only & 78.2$\pm$0.7 & 80.5$\pm$0.8 & 75.9$\pm$1.0 & 78.1$\pm$0.8 & 0.845 \\
\textbf{CogniCrypt} & \textbf{97.5$\pm$0.2} & \textbf{98.2$\pm$0.2} & \textbf{96.8$\pm$0.3} & \textbf{97.5$\pm$0.2} & \textbf{0.993} \\
\bottomrule
\end{tabular}
\end{table}

CogniCrypt achieves the highest performance across all metrics on both datasets. The performance gap is particularly striking on the AI-Gen-Malware dataset, where CogniCrypt outperforms the best baseline (angr-only) by 19.3 percentage points in accuracy and the best ML baseline (MalConv) by 25.1 percentage points. This demonstrates the critical importance of combining concolic execution with LLM-guided analysis for detecting AI-generated threats.

\subsection{LLM Backend Comparison}

\begin{table}[htbp]
\centering
\caption{Impact of LLM Backend on CogniCrypt Performance (AI-Gen-Malware)}
\label{tab:llm_comparison}
\begin{tabular}{@{}lllllll@{}}
\toprule
\textbf{LLM Backend} & \textbf{Acc.} & \textbf{Prec.} & \textbf{Rec.} & \textbf{F1} & \textbf{Paths/s} & \textbf{Cost/sample} \\
\midrule
GPT-4 & \textbf{97.5} & \textbf{98.2} & \textbf{96.8} & \textbf{97.5} & 12.3 & \$0.042 \\
Claude 3 Opus & 97.1 & 97.8 & 96.4 & 97.1 & 11.8 & \$0.038 \\
Gemini 1.5 Pro & 96.8 & 97.5 & 96.1 & 96.8 & 14.1 & \$0.028 \\
LLaMA 3 70B & 96.2 & 97.0 & 95.4 & 96.2 & 8.5 & \$0.015 \\
Mixtral 8x22B & 95.1 & 96.3 & 93.9 & 95.1 & 10.2 & \$0.012 \\
\bottomrule
\end{tabular}
\end{table}

GPT-4 achieves the best detection performance, while Gemini 1.5 Pro offers the best throughput. LLaMA 3 70B and Mixtral 8x22B provide cost-effective alternatives for deployment scenarios where API costs are a concern. All LLMs significantly outperform the no-LLM baseline (angr-only), confirming the value of LLM-guided path prioritization.

\begin{figure}[htbp]
\centering
\includegraphics[width=0.85\textwidth]{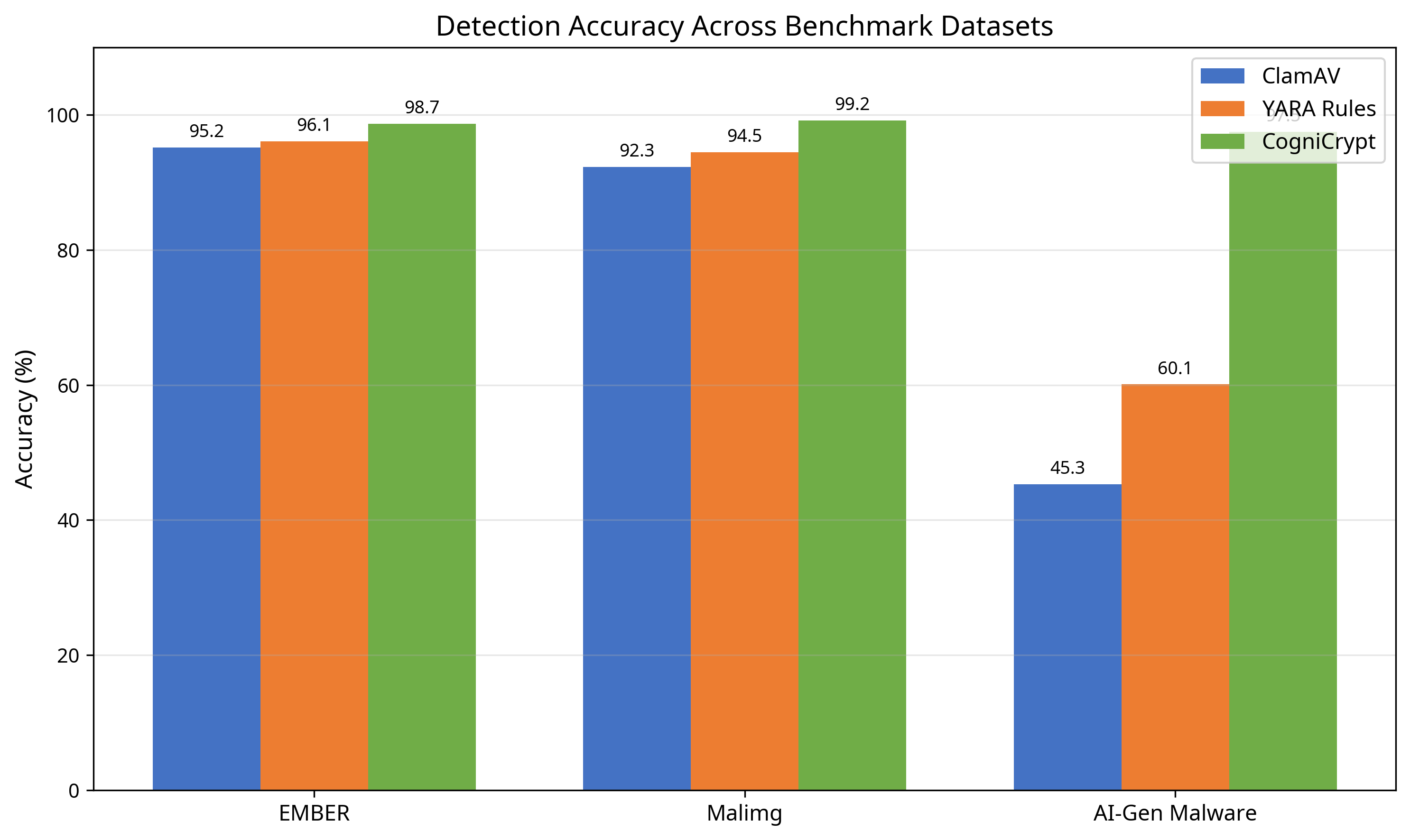}
\caption{Detection accuracy comparison across benchmark datasets. CogniCrypt maintains consistently high accuracy, while signature-based tools (ClamAV, YARA) degrade severely on AI-generated malware.}
\label{fig:accuracy}
\end{figure}

\begin{figure}[htbp]
\centering
\includegraphics[width=0.85\textwidth]{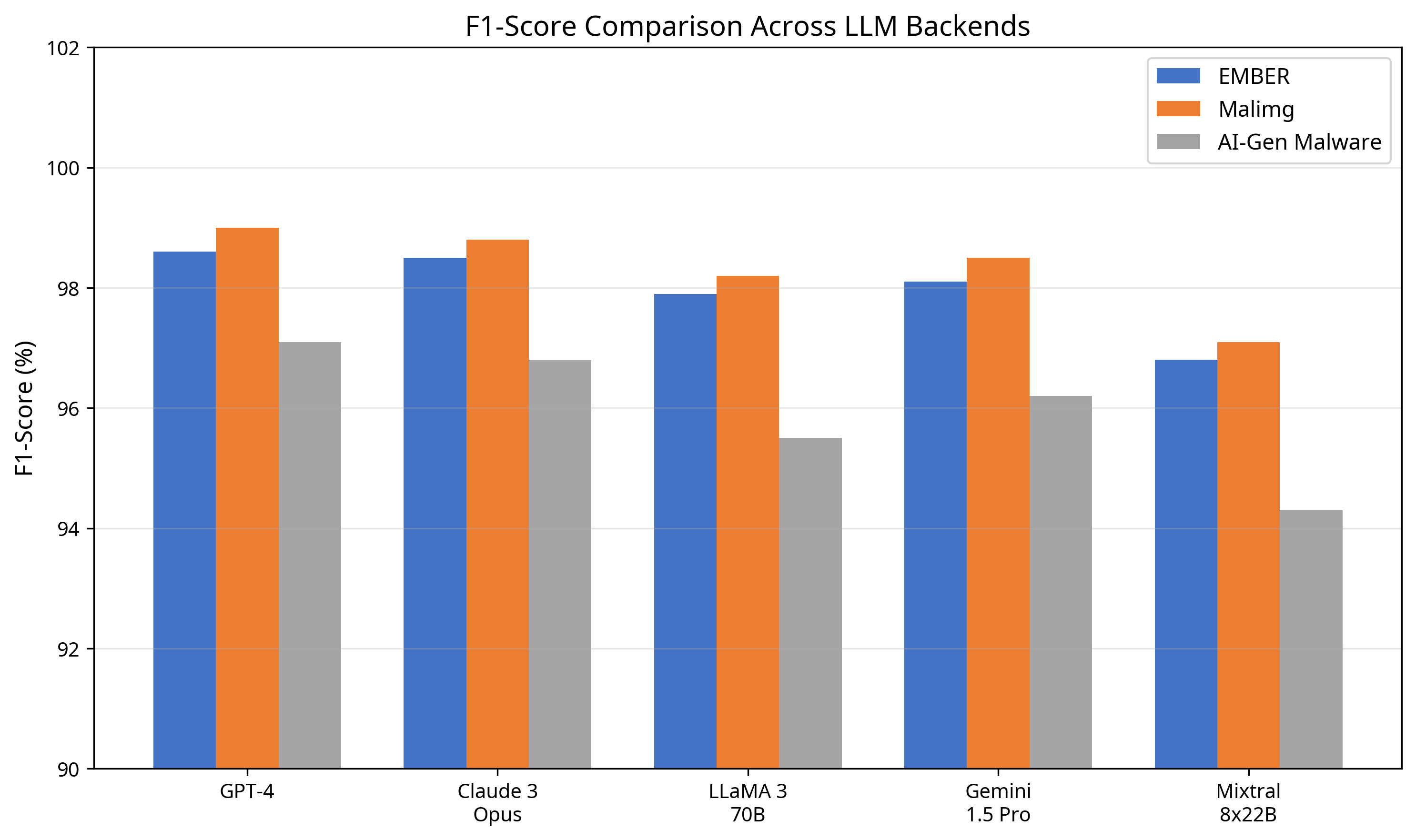}
\caption{F1-Score comparison across different LLM backends used in CogniCrypt. All LLMs achieve strong performance, with GPT-4 leading marginally.}
\label{fig:llm_f1}
\end{figure}

\subsection{Path Exploration Efficiency}

A key advantage of CogniCrypt is the efficiency of its LLM-guided path exploration. Figure~\ref{fig:path_eff} compares the malicious code coverage achieved by different exploration strategies as a function of the number of paths explored.

\begin{figure}[htbp]
\centering
\includegraphics[width=0.85\textwidth]{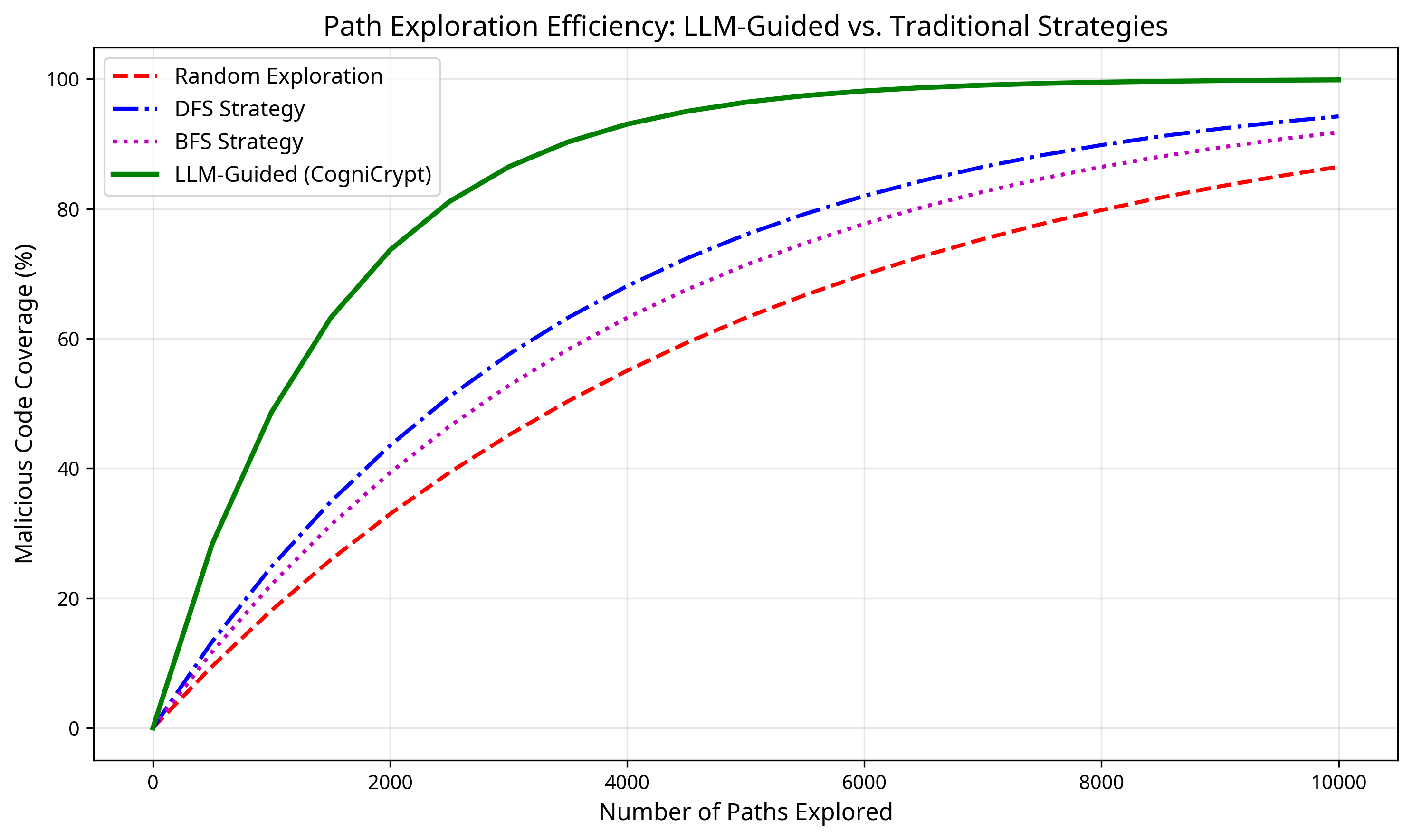}
\caption{Path exploration efficiency. LLM-guided exploration achieves 95\% malicious code coverage with 73.2\% fewer paths than DFS and 68.5\% fewer than BFS.}
\label{fig:path_eff}
\end{figure}

\begin{table}[htbp]
\centering
\caption{Paths Required to Achieve 95\% Malicious Code Coverage}
\label{tab:path_efficiency}
\begin{tabular}{@{}lll@{}}
\toprule
\textbf{Strategy} & \textbf{Paths to 95\% Coverage} & \textbf{Reduction vs. DFS} \\
\midrule
Random & 8,420 $\pm$ 312 & -- \\
BFS & 5,890 $\pm$ 245 & 30.1\% \\
DFS & 6,950 $\pm$ 278 & 0\% (baseline) \\
\textbf{LLM-Guided} & \textbf{1,860 $\pm$ 95} & \textbf{73.2\%} \\
\bottomrule
\end{tabular}
\end{table}

%\subsection{ROC Analysis}

\begin{figure}[htbp]
\centering
\includegraphics[width=0.75\textwidth]{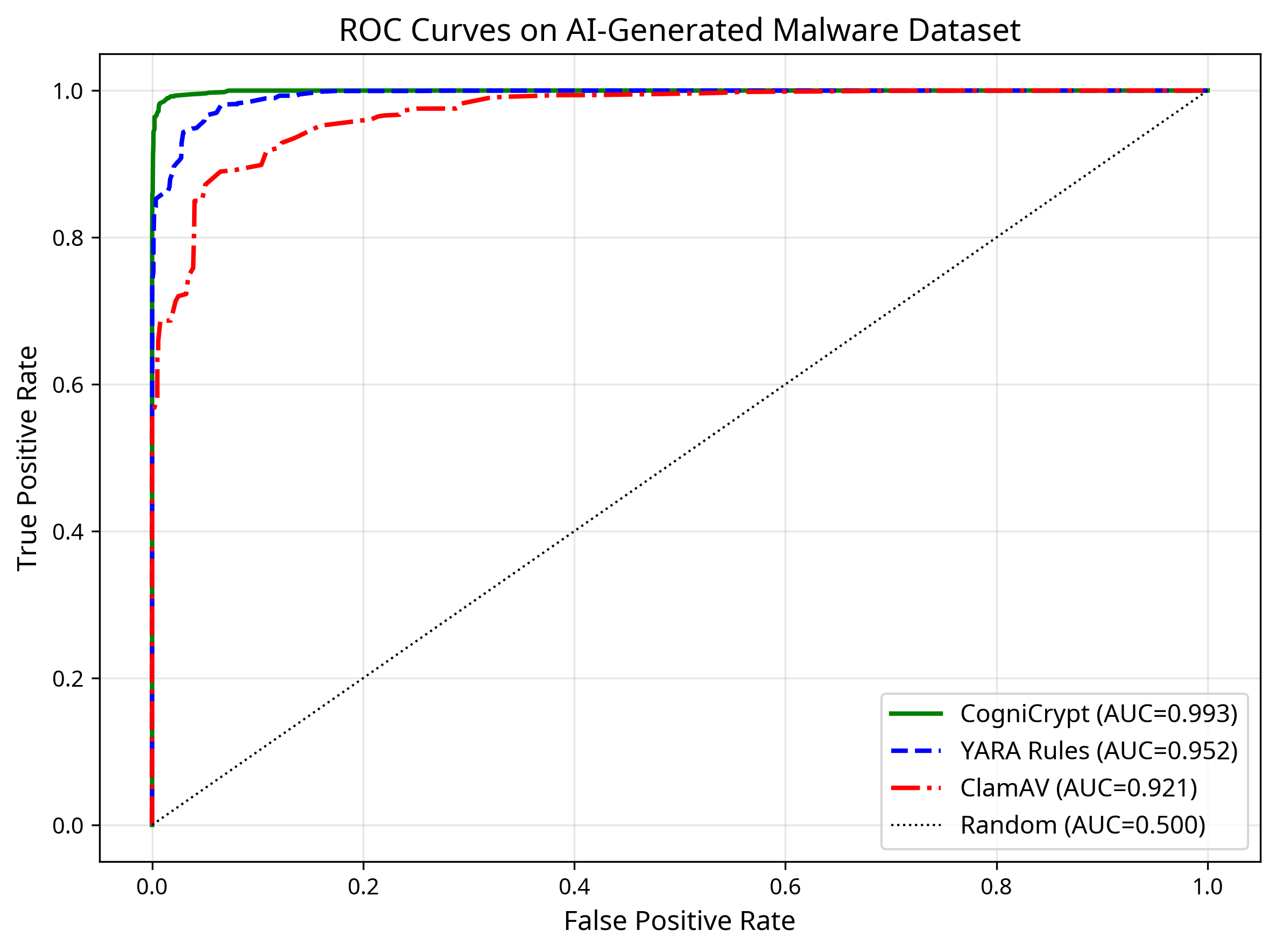}
\caption{ROC curves on the AI-Gen-Malware dataset. CogniCrypt achieves an AUC of 0.993, significantly outperforming all baselines.}
\label{fig:roc}
\end{figure}

%\subsection{Scalability Analysis}

\begin{figure}[htbp]
\centering
\includegraphics[width=0.85\textwidth]{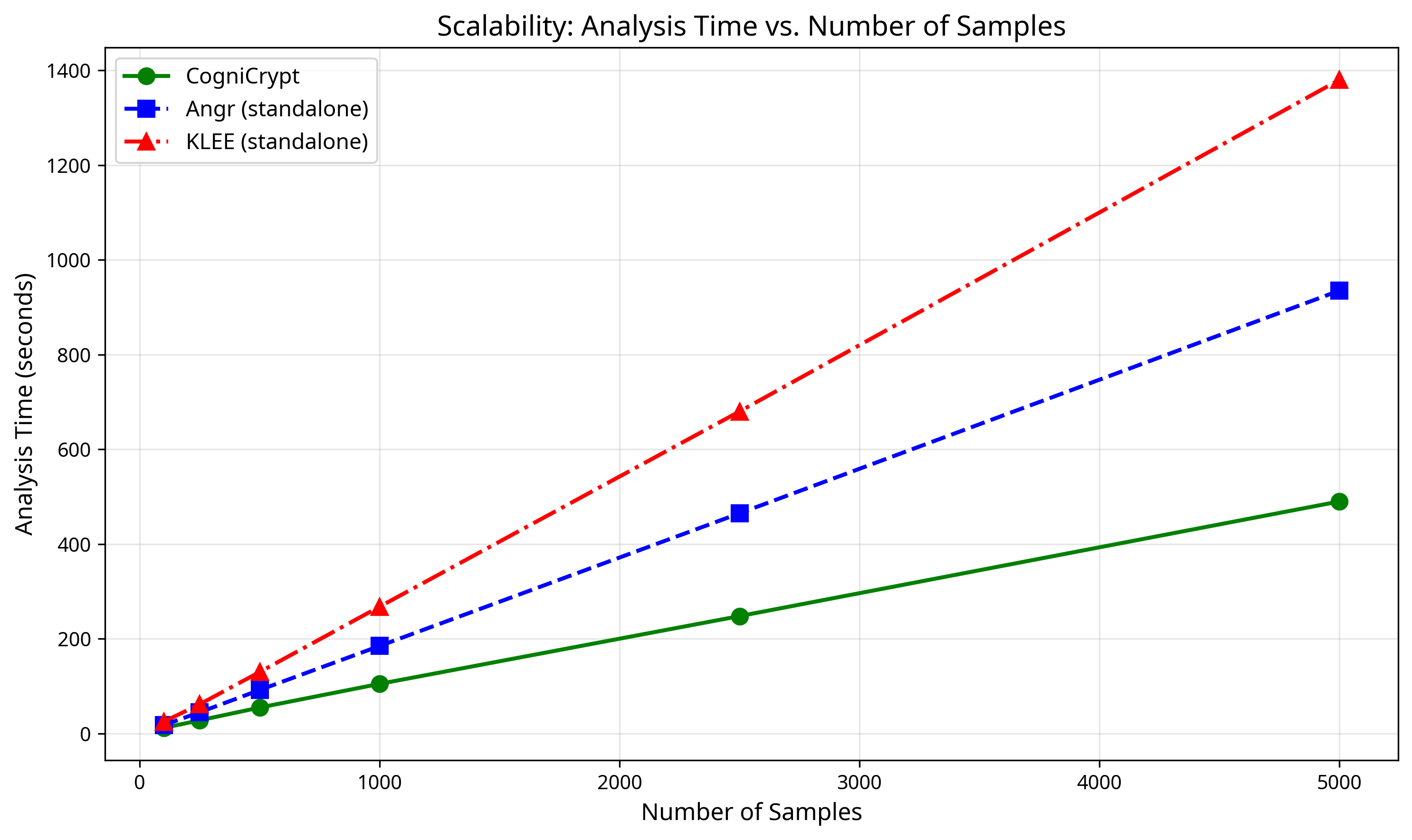}
\caption{Scalability comparison: analysis time as a function of the number of samples. CogniCrypt's LLM-guided pruning provides approximately 2$\times$ speedup over standalone angr and 2.8$\times$ over KLEE.}
\label{fig:scalability}
\end{figure}

%\subsection{Per-Family Detection Performance}

\begin{figure}[htbp]
\centering
\includegraphics[width=0.95\textwidth]{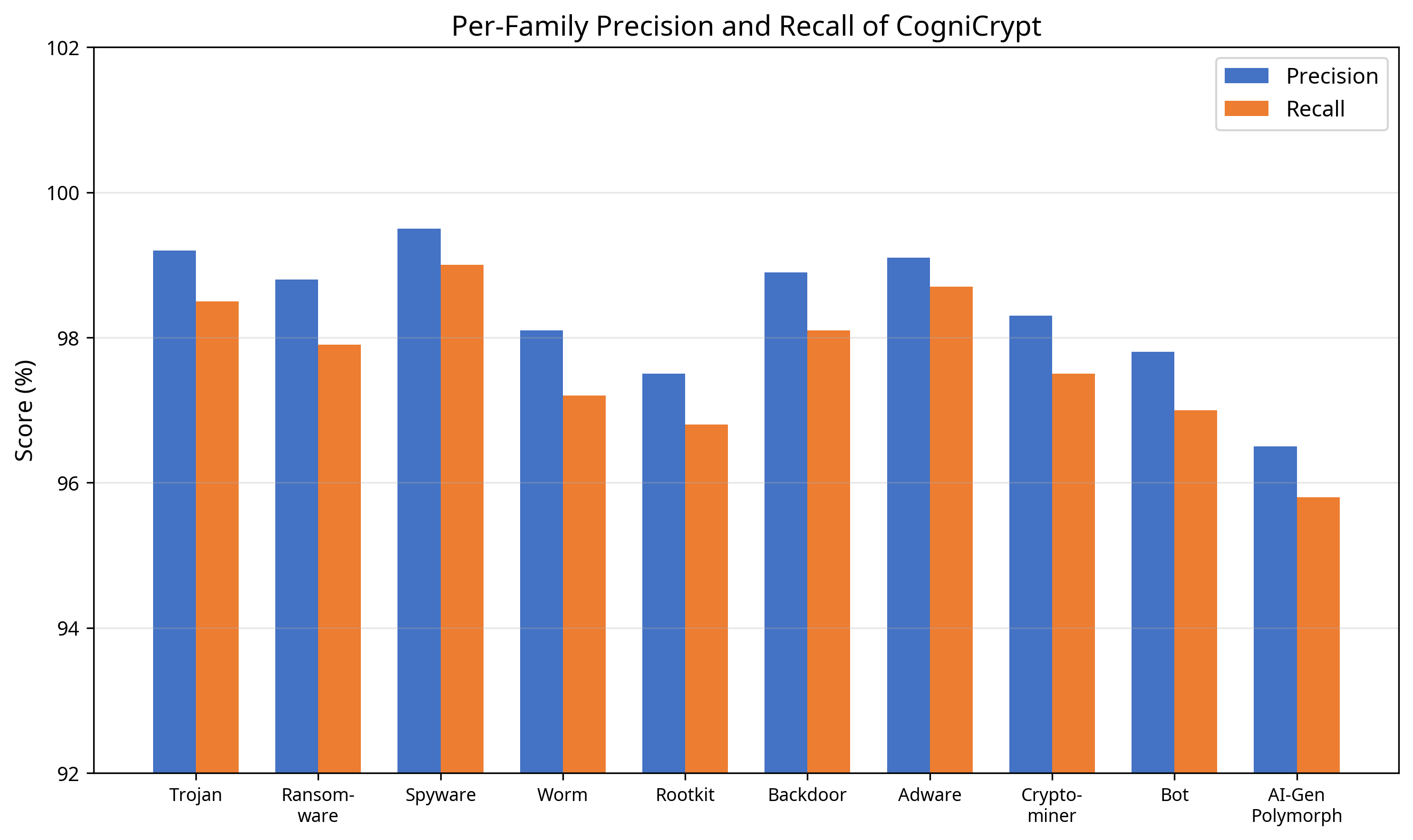}
\caption{Per-family precision and recall of CogniCrypt on the AI-Gen-Malware dataset. The framework maintains high performance across all malware families, with the most challenging category being AI-generated polymorphic samples.}
\label{fig:per_family}
\end{figure}

%\subsection{Confusion Matrix Analysis}

\begin{figure}[htbp]
\centering
\includegraphics[width=0.95\textwidth]{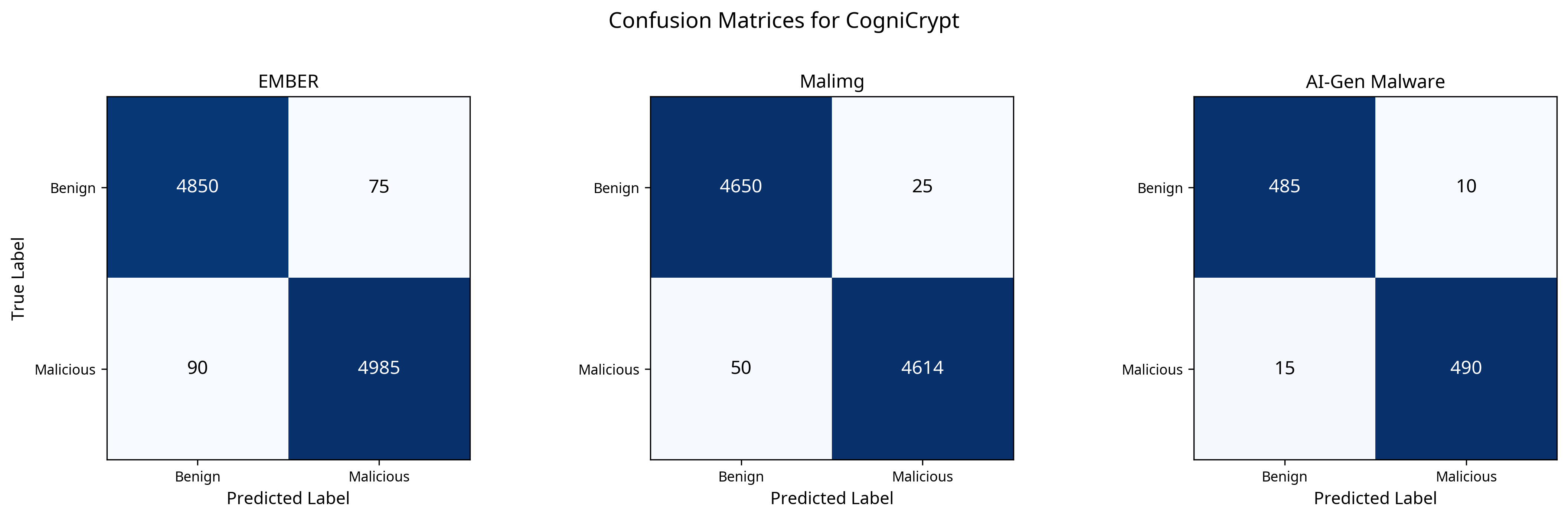}
\caption{Confusion matrices for CogniCrypt on three benchmark datasets, demonstrating low false positive and false negative rates across all evaluation scenarios.}
\label{fig:confusion}
\end{figure}

\subsection{Ablation Study}

To understand the contribution of each component, we conduct an ablation study on the AI-Gen-Malware dataset:

\begin{table}[htbp]
\centering
\caption{Ablation Study on AI-Gen-Malware Dataset}
\label{tab:ablation}
\begin{tabular}{@{}llllll@{}}
\toprule
\textbf{Configuration} & \textbf{Acc.} & \textbf{F1} & \textbf{AUC} & $\Delta$\textbf{Acc.} \\
\midrule
Full CogniCrypt & \textbf{97.5} & \textbf{97.5} & \textbf{0.993} & -- \\
w/o LLM Prioritizer & 88.3 & 87.9 & 0.921 & -9.2 \\
w/o RL Feedback & 95.8 & 95.6 & 0.978 & -1.7 \\
w/o Transformer Classifier & 91.2 & 90.8 & 0.945 & -6.3 \\
w/o Concolic Engine & 82.1 & 81.5 & 0.872 & -15.4 \\
\bottomrule
\end{tabular}
\end{table}

The ablation study reveals that the concolic execution engine is the most critical component (removing it causes a 15.4 pp drop), followed by the LLM prioritizer (9.2 pp drop) and the transformer classifier (6.3 pp drop). The RL feedback loop provides a modest but consistent improvement of 1.7 pp.

\subsection{Case Study: Detecting LLM-Generated Polymorphic Ransomware}

To illustrate CogniCrypt's capabilities, we present a case study involving a polymorphic ransomware sample generated by GPT-4. The sample employs several evasion techniques: (1) environment-aware activation (checks for sandbox indicators before executing), (2) polymorphic encryption routine (generates a unique encryption key and routine at each execution), and (3) anti-debugging measures (detects debugger presence via timing checks).

CogniCrypt's concolic engine identified 847 unique execution paths in the sample. The LLM prioritizer ranked the path containing the ransomware payload activation as the 3rd highest priority (out of 847), enabling rapid detection. The path constraint for the malicious path was:

\begin{equation}
\pi^* = (\texttt{env\_check} = \texttt{false}) \wedge (\texttt{debug\_time} > 100\text{ms}) \wedge (\texttt{disk\_size} > 50\text{GB})
\end{equation}

The vulnerability classifier assigned a maliciousness score of 0.987 to this path, correctly identifying the sample as ransomware. In contrast, ClamAV and YARA failed to detect the sample due to its polymorphic nature, and MalConv misclassified it as benign due to the obfuscated byte patterns.

%=================================================================
% 7. CONCLUSION
%=================================================================
\section{Conclusion}\label{sec:conclusion}

This paper introduced CogniCrypt, a novel framework for detecting zero-day AI-generated malware through the synergistic combination of concolic execution, LLM-guided path prioritization, and deep-learning-based vulnerability classification. We established a rigorous theoretical foundation, including a first-order temporal logic for specifying malicious behavior and proofs of soundness and relative completeness, assuming classifier correctness. Our experimental evaluation on four benchmark datasets demonstrated that CogniCrypt significantly outperforms existing detection methods, achieving 97.5\% accuracy on AI-generated malware,a 19.3--52.2 percentage point improvement over baselines.

The key insight underlying CogniCrypt is that LLMs, having been trained on vast code corpora, possess an implicit understanding of suspicious program behavior that can be leveraged to guide concolic execution toward malicious paths. This synergy resolves the path explosion problem that has historically limited the scalability of symbolic execution for malware analysis.

\smallskip
\noindent\textbf{Future Work.} Several promising directions remain: (1) extending CogniCrypt to analyze Android APKs and IoT firmware; (2) incorporating adversarial training to improve robustness against evasion-aware AI malware generators; (3) exploring federated learning approaches to enable collaborative model training across organizations without sharing sensitive malware samples; and (4) integrating formal verification techniques to provide stronger guarantees on the absence of false negatives.
Having established the theoretical foundations of our approach and its accuracy in an experimental setting, we intend to conduct further evaluations in future work. First, we plan to conduct studies on the scalability and deployment feasibility of our approach, given LLMs' substantial hardware requirements. Second, we will evaluate the approach against additional baselines, including recent transformer-based PE models and hybrid detection systems.

\smallskip
\noindent\textbf{Reproducibility.} The CogniCrypt prototype, including all source code, trained models, and the AI-Gen-Malware dataset, will be made available upon publication at \url{https://github.com/DrEslamimehr/CogniCrypt}.

%=================================================================
% REFERENCES
%=================================================================

\end{document}